\DeclareMathOperator{\esssup}{ess\: sup}
\newcommand{\wadd}{\textrm{WADD}}
\newcommand{\arl}{\textrm{ARL}}
\theoremstyle{plain}
\newtheorem{theorem}{\protect\theoremname}
  \theoremstyle{plain}
  \theoremstyle{plain}
  \theoremstyle{plain}
   \newtheorem{lemma}{\protect\lemmaname}
  \newtheorem{corollary}{Corollary}
\theoremstyle{remark}
\newtheorem{remark}{Remark}
\DeclareMathOperator*{\argmax}{arg\,max}
  \providecommand{\definitionname}{Definition}
  \providecommand{\lemmaname}{Lemma}
  \providecommand{\propositionname}{Proposition}
\providecommand{\theoremname}{Theorem}
\providecommand{\conjecturename}{Conjecture}
\begin{document}

\title{Quickest change detection for UAV-based sensing 

 \thanks{Saqib Abbas Baba is with the Department of Electrical Engineering, Indian Institute of Technology (IIT), Delhi. Email: saqib.abbas@ee.iitd.ac.in}
  \thanks{ Anurag Kumar is with the Department of Electrical Communication Engineering, IISc Bangalore. Email: anurag@iisc.ac.in }
  \thanks{Arpan Chattopadhyay is with the Department of Electrical Engineering and the Bharti School of Telecom Technology and Management, IIT Delhi. Email: arpanc@ee.iitd.ac.in }
}

\author{
 Saqib Abbas
  \hspace{0.1cm} Anurag Kumar
  \hspace{0.1cm} Arpan Chattopadhyay \vspace {-0.4in}
}

\maketitle

\begin{abstract}
This paper addresses the problem of quickest change detection (QCD) at two spatially separated locations monitored by a single unmanned aerial vehicle (UAV)  equipped with a sensor. At any location, the UAV observes i.i.d. data sequentially in discrete time instants. The distribution of the observation data changes at some unknown, arbitrary time and the UAV has to detect this change in the shortest possible time. Change can occur at most at one location over the entire infinite time horizon.   The UAV switches between these two  locations in order to quickly detect the change. To this end, we propose   Location Switching and Change Detection (LS-CD) algorithm which uses  a repeated one-sided  sequential probability ratio test (SPRT) based mechanism for observation-driven location switching and change detection. The primary goal is to minimize the worst-case average detection delay (WADD) while meeting constraints on the average run length to false alarm (ARL2FA) and the UAV’s time-averaged energy consumption. We provide a rigorous theoretical analysis of the algorithm’s performance by using theory of random walk. Specifically, we derive tight   upper and lower  bounds to its ARL2FA and a tight upper bound to  its WADD.   In the special case of a symmetrical setting, our analysis leads to a new asymptotic upper bound to the ARL2FA of the standard CUSUM algorithm,  a novel contribution not available in the literature, to our knowledge. Numerical simulations  demonstrate the efficacy of LS-CD.
\end{abstract}
\begin{IEEEkeywords}
Change-point detection, CUSUM, energy efficiency, quickest change detection, UAV based monitoring.
\end{IEEEkeywords}

\section{Introduction}\label{Introduction}

Many critical applications such as surveillance, intrusion detection,  environmental monitoring, predictive maintenance and security systems,  often require  detection of  changes or anomalies across multiple locations.   Traditional literature on quickest change detection (QCD \cite{poor2008quickest}) seeks to minimize detection delay for such applications under constraints on false alarm rates, albeit primarily for a single data stream. Seminal contributions by Shiryaev \cite{shiryaev1963optimum} on Bayesian QCD  and Lorden \cite{lorden1971procedures} on non-Bayesian QCD laid the theoretical foundation for detection procedures based on optimal stopping rules. In particular,  Lorden’s min-max formulation \cite{lorden1971procedures} and its  later extension by Moustakides \cite{moustakides1986optimal} established asymptotic and non-asymptotic optimality of the CUSUM (Cumulative Sum) algorithm (originally introduced by Page \cite{page1954continuous}) for the problem of   minimizing WADD under a constraint on ARL2FA for a single data stream with i.i.d. observations in pre and post change regimes. Subsequent work by  Lai \cite{lai1998information} further established the optimality of CUSUM in a number of settings with non-i.i.d. observation sequence. On the other hand, many prior studies on multi-sensor monitoring \cite{lai2010sequential, veeravalli2001decentralized, tartakovsky2004change, mei2010efficient, xie2013sequential, fellouris2017multistream, tartakovsky2014sequential, banerjee2015data,rovatsos2021quickest,wang2015large,xie2019asynchronous,raghavan2010quickest,zou2019quickest,hadjiliadis2009one} primarily focused on monitoring a single process by multiple sensors and often communicating them over error-free communication channels to a remote decision making node, which effectively reduces the problem to QCD performed by a single sensor monitoring a single observation stream. Recently, the paper \cite{xu2021optimum} has considered multi-stream change-point detection where the sensor may sample only one data stream at a time and can switch to any other  data stream in the next time instant without any cost and delay. The authors propose a simple greedy sampling policy combined with a CUSUM test and establish its asymptotic optimality. At each time step, the algorithm uses a myopic sampling policy and updates a local change statistic, similar to a CUSUM score. If a change is detected, a global alarm is triggered; otherwise, the procedure continues sampling other streams based on the myopic sampling policy. In scenarios where only a subset of sensors can be observed at each time step, the paper \cite{zhang2019partially} proposed an online learning framework  to detect change-points from partial observations-  it formulated the sensor subset selection problem as a multi-armed bandit problem.

\begin{figure}[t!]
    \centering
    \includegraphics[height=4.5cm, width=1\linewidth]{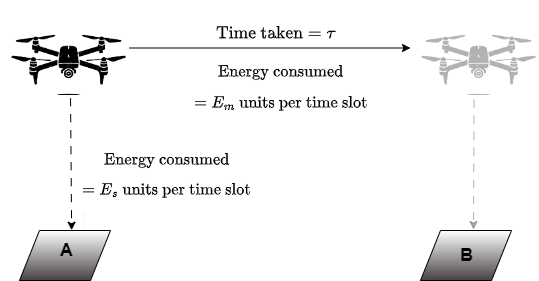}
    \vspace{-4mm}
    \caption{System setup}
    \label{fig:sys_setup}
\end{figure}

However, monitoring multiple processes across multiple locations by using a mobile sensor such as a UAV introduces additional challenges such as switching delay of UAV,  energy expenditure for the UAV's movement and hovering, optimal location switching algorithm design for the UAV, etc. In this paper, we seek to address these challenges for the scenario in which a single UAV monitors two independent processes at two spatially separated locations by dynamically switching between them in order to detect change in observation distribution  at either location. To this end, we propose the LS-CD algorithm  in which the UAV dynamically updates a detection statistic at each visited location in order to decide whether to continue sensing there, or whether to raise an alarm, or whether to switch location. We analyze ARL2FA and WADD performance of this algorithm, and also numerically demonstrate its energy consumption benefits.

The main contributions of this paper are summarized as follows:
\begin{enumerate}
    \item We propose the LS-CD  algorithm to minimize WADD under ARL2FA and energy consumption rate constraints for the UAV. 
    \item We derive tight upper and lower  bounds on the ARL2FA of LS-CD. For the special case with zero switching time of the UAV and identical pre and post change distributions across both locations, the upper bound specializes to a novel asymptotic upper bound to the ARL2FA of standard CUSUM algorithm under i.i.d. observation model in pre and post change regimes- this upper bound is only a constant factor away from the standard lower bound to the ARL2FA of CUSUM. ARL2FA analysis of LS-CD involves significant mathematical challenges that could be handled efficiently by invoking theory from random walk literature.
    \item We also derive a  tight upper bound for the WADD achieved by LS-CD. One additional significant challenge in this derivation was finding the minimal set of ``states" of the UAV at the change instant, that could give rise to worst case detection delay.
\end{enumerate}

Though the LS-CD algorithm is presented in this work for a two-location scenario, it naturally generalizes to settings with multiple locations. In practice, a UAV can employ a round-robin or cyclic switching strategy to decide the next location to be monitored, while preserving the structure of the local detection and switching rule at each  location. Our analyses can be easily  extended to this   general setting.

The remainder of the paper is organized as follows. Section~\ref{section:background} gives a brief introduction to the classical non-Bayesian QCD framework. Section~\ref{section:system-model} then describes the system model, the problem formulation and  the proposed LS-CD algorithm.  Section~\ref{section:Performance analysis} derives performance bounds for ARL2FA and WADD of LS-CD.  Section~\ref{sec:numerical_results} presents the numerical results.  Finally, conclusions are made in  Section~\ref{sec:conclusion}. All proofs are provided in the appendices.

\section{Background}
\label{section:background}

Quickest change detection (QCD) problems typically involve a sequence of observations $\{Y_t\}_{t\ge1}$ whose distribution changes at some unknown time $\nu$. Before the change occurs ($t < \nu$), the observations follow a known pre-change distribution $f$ and after the change ($t \geq \nu$), the observations follow a different post-change distribution $g$, i.e.,
\begin{align*}
    Y_t \sim \begin{cases}
        f \quad t < \nu \\
         g \quad t \geq \nu
    \end{cases} 
\end{align*}
A decision maker collects the observations $\{Y_t\}_{t\ge1}$ sequentially and decides to stop at a time $T$ and declare that a change has occurred.
The objective in QCD is to design a stopping time $T$ that detects the change as quickly as possible while controlling false alarms. 
The worst case detection delay is defined as~\cite{lorden1971procedures}
\[
  \wadd = \sup_{\nu\ge1} \esssup \mathbb{E}_\nu\bigl[(T-\nu)^+ \bigm| Y_1,\dots,Y_{\nu-1}\bigr],
\]
where $\mathbb{E}_\nu[\cdot]$ denotes the expectation under the probability law induced by the change occurring at time $\nu$. The goal is to minimize $\wadd$ subject to a constraint on the expected run length to false alarm:
\begin{equation} \label{eq:qcd_problem}
    \min_T \ \wadd, \quad \text{s.t. } \mathbb{E}_{\infty}[T] \geq r.
\end{equation}
Here, the ARL2FA, $\mathbb{E}_{\infty}[T]$ is the expected stopping time under the no-change regime (i.e., $\nu=\infty$), ensuring that false alarms do not occur too frequently.

Seminal results by Shiryaev~\cite{shiryaev1963optimum} and Lorden~\cite{lorden1971procedures} established the theoretical foundation for such detection procedures. Building on these, Page's CUSUM procedure~\cite{page1954continuous} was shown by Lorden to have strong optimality properties, and Moustakides~\cite{moustakides1986optimal} provided a rigorous proof that CUSUM minimizes Lorden's worst-case detection delay while respecting an $\mathrm{ARL}$ constraint.

The CUSUM statistic is updated at each time step $t$, and an alarm is raised when it exceeds a threshold $\gamma$. The worst-case average detection delay scales as $\frac{\gamma}{D(f\mid\mid g)}$, where $D(f\mid\mid g)$ denotes the Kullback--Leibler (KL) divergence between the pre-change and post-change distributions.

\section{System Model}\label{section:system-model}
We now extend the classical QCD framework to a scenario in which a single UAV moves between two spatially separated locations, $\mathcal{L} = \{A,B\}$. Time is discrete and indexed by $t=1,2,3,\cdots$. The time taken by the UAV to move from one location to another is $\tau$ time slots (deterministic), and requires $E_m$ amount of energy per slot (also deterministic). The energy spent by the UAV in a single time slot while it is hovering over and sensing at a location is given by $E_{s}$ (see Figure~\ref{fig:sys_setup}).

At time~$t$, if the UAV is monitoring location~$l \in \mathcal{L}$, a measurement $Y_{l,t} \in \mathbb{R}^{s \times 1}$ is obtained. We assume that the measurements $\{Y_{l,t}\}_{t \geq 1}$ are independent across the two locations. At any location~$l$, the change event of our interest occurs at an unknown, arbitrary time $\nu_l$. \textit{We assume that the change can occur in at most one location over the entire time horizon.}

Each location $l \in \mathcal{L}$ is associated with two hypotheses:
\begin{eqnarray*}
 && H_0^{l}: Y_{l,t} \sim f_{l}(\cdot) \text{ i.i.d. for $t < \nu_l$}, \\
&& H_1^{l}: Y_{l,t} \sim g_{l}(\cdot) \text{ i.i.d. for $t \geq \nu_l$}.
\end{eqnarray*}
Here $\nu_l$ is the unknown change point at location~$l$, and $f_{l}(\cdot)$ and $g_{l}(\cdot)$ are the pre-change and post-change distributions of the observations, respectively, assumed to be known to the decision making algorithm in the UAV. Note that, $\nu_l=\infty$ implies that no change ever occurs at location $l$.

Since the UAV physically moves between locations $A$ and $B$, only one location can be observed at a time. The UAV follows a switching rule $\pi$ that determines when to switch to the other location, based on observations. Let $\tau_l$ represent a typical sojourn time of the UAV at location~$l$, and denote by $\overline{\tau_l}$ its mean and by $\overline{\tau_l^2}$ its second moment when no change occurs at location $l$ (i.e., $\nu_l=\infty$). These moments are dependent on the switching policy employed by the UAV and the observation statistics at location $l$ only. Let $T'_A$ and $T'_B$ denote the stopping times at which the UAV declares a change at locations $A$ and $B$, respectively. The UAV must balance two key detection metrics: the \emph{worst-case average detection delay} ($\wadd_l$) and the \emph{average run length to false alarm} ($\arl_l$). Specifically, we use Lorden’s min--max approach~\cite{lorden1971procedures} for each $l$ and define:
\begin{align}\label{eqn:WADD_l_modified_lorden}
    \wadd_l = 
    \sup_{\nu_l\ge1}
    \esssup \mathbb{E}_{\nu_l}^{l}\bigl[(T'_l-\nu_l)^+ \mid \mathcal{I}_l^{\nu_l-1}\bigr]
\end{align}

\begin{figure}[t!]
\centering
    \includegraphics[width=\columnwidth]{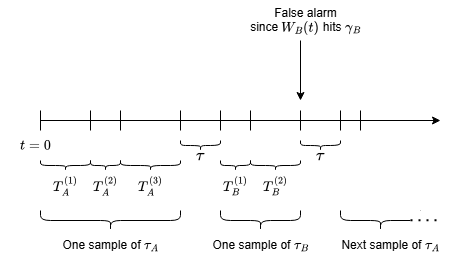}
    \vspace{-2mm}
    \caption{Illustration of the location switching procedure under the LS-CD algorithm with $n_A = n_B = 3$ and $\nu_A = \nu_B = \infty$. The intervals $T_A^{(1)}, T_A^{(2)}, T_A^{(3)}$ are i.i.d. samples from the distribution of $T_A$ under no change.}
    \label{fig:sprt_stop_time}
\end{figure}
Here, $\mathbb{E}_{\nu_l}^l$ denotes expectation assuming a change occurs at location~$l$ at time $\nu_l$. Unlike the standard Lorden’s formulation, where all observations are continuously available, the history $\mathcal{I}_l^{\nu_l-1}$ represents the collection of measurements the UAV has gathered at location $l$ upto time $\nu_l-1$, which is not necessarily complete due to the UAV’s intermittent presence at location $l$.

We define $\arl_l$ as the expected time till a false alarm is raised at either location, given that the UAV starts hovering at location $l$ at $t=0$ with $\nu_A=\nu_B=\infty$, and the QCD algorithm starts running at $t=0$.

By renewal reward theorem\cite{ross1995stochastic}, the average energy consumption in the pre-change regime (i.e., $\nu_A=\nu_B=\infty$) by the UAV per unit time step is \(\frac{2\tau E_{m}+ (\bar{\tau}_A+\bar{\tau}_{B}) E_s}{(\bar{\tau}_{A}+\bar{\tau}_{B}+2\tau)}\), which needs to be kept below a threshold $\bar{E}$.

Thus, the optimization problem is formulated as follows:
\begin{align}
&& \min_{\{T'_A,T'_{B},\pi\}} \max \{\wadd_{A}, \wadd_{B}\} \nonumber\\
\textbf{s.t. } &&  \arl_A \geq r_A, \arl_{B} \geq r_{B}, \nonumber\\
&& \frac{2\tau E_{m}+ (\bar{\tau}_A+\bar{\tau}_{B}) E_s}{(\bar{\tau}_{A}+\bar{\tau}_{B}+2\tau)} \leq \bar{E}\tag{CP} \label{opt:CP}
\end{align}
Note that, solving \eqref{opt:CP} is challenging due to the UAV's complex dynamics and limited observation capability, and computationally hard due to its highly non-convex nature.

Under classical Lorden’s formulation, CUSUM continuously observes data, and raises an alarm when a cumulative statistic exceeds a threshold. In our setting, the UAV can miss a change at location $A$ while it is observing $B$, or while in transit. Thus, the standard CUSUM’s assumption of uninterrupted observation does not apply. To handle this challenge, we propose the \emph{location switching--change detection (LS-CD)} algorithm as a heuristic approach for approximate solutions to \eqref{opt:CP}. LS-CD is motivated by the observation that CUSUM involves repeated one sided SPRTs. The LS-CD algorithm adaptively updates local CUSUM statistics at each visited location and decides when to switch based on the current statistics and a user-defined policy. Although LS-CD may incur some additional delay relative to a purely continuous CUSUM in a static scenario, it maintains an order-optimal alignment with classical CUSUM properties under Lorden’s criterion, under energy constraints. The subsequent sections provide a detailed description of LS-CD and theoretical bounds for its performance in terms of both $\arl_l$ and $\wadd_l$.

\subsection{The Location Switching and Change Detection Algorithm: LS-CD }\label{subsection:Algorithm}

At each monitored location, the detection process consists of repeated one-sided sequential probability ratio tests (SPRTs). Each cycle of the SPRT accumulates evidence for a change until either (i) the decision statistic exceeds a threshold, triggering a detection, or (ii) the statistic resets to zero, indicating no change. Since the UAV intermittently switches between locations, it can only update the statistic at the currently monitored location.

At time \(t\), if the UAV is monitoring location~\(l\), the following statistic is updated:
\begin{eqnarray}\label{eqn:CUSUM_statistic}
    W_l(t+1)=\max \Big\{W_l(t) + \log \left(\frac{g_l(Y_{l,t})}{f_l(Y_{l,t})}\right),0 \Big\} 
\end{eqnarray}

The Sequential Probability Ratio Test (SPRT) applied at location $l$ is defined as
\begin{eqnarray}\label{eqn:stopping_time}
    T_l = \inf \{t \geq 1: W_l(t) \notin (0,\gamma_l)\}, \quad W_l(0)=0,
\end{eqnarray}
assuming that the one-sided SPRT starts at $t=0$ at location $l$.
While the UAV monitors location $A$, the statistic \( W_B(t) \) remains unchanged due to the lack of observations at $B$, and vice versa.

\subsubsection*{Location Switching Rule}
The UAV monitors a location \( l \) until \( W_l(t) \) hits $0$ for \( n_l \) consecutive SPRT cycles.  Each reset to zero indicates that the pre-change hypothesis is likely to hold. Our rule asserts that if this return to $0$ happens in $n_l$ consecutive cycles, the UAV switches to the other location to explore potential changes there  (see figure~\ref{fig:sprt_stop_time}).

Upon switching, the statistic \( W_l(t) \) at the previous location is reset to zero, and the monitoring process continues at the new location using the same procedure.

\subsubsection*{Change Detection Rule}
At each monitored location \( l \), the CUSUM-like statistic \( W_l(t) \) accumulates evidence based on the observed data. When \( W_l(t) \) reaches the predefined threshold \( \gamma_l \), a change is declared. This threshold \( \gamma_l \) is set to control the false alarm rate, ensuring alarms only when there is strong evidence of a change.

\begin{remark}
While the modified Lorden’s WADD in  \eqref{eqn:WADD_l_modified_lorden}    incorporates partial observation history $\mathcal{I}_l^{\nu_l-1}$, the heuristic LS-CD  algorithm resets the statistic $W_l(t)$ to $0$ at the time of switching, and thus the history of observations gathered at location $l$ up to time $t$ is discarded by the algorithm.
\end{remark}

\section{Performance analysis}\label{section:Performance analysis}

In this section, we analyze the performance of the LS-CD algorithm introduced in Section~\ref{subsection:Algorithm}. Specifically, we establish bounds on ARL2FA and WADD at each location \( l \) under the LS-CD algorithm.

Let $\mathbb{P}_{\nu_l}^l$ and $\mathbb{E}_{\nu_l}^l$ represent the underlying probability law and expectation, respectively, when the change occurs at time $\nu_l$ at location $l$ with no change at the other location (i.e., $\nu_{l'}=\infty$ where $l' \doteq \{A,B\} \setminus l$). Similarly, $\mathbb{P}_{0}^l$ and $\mathbb{E}_{0}^l$ denote the probability law and expectation in the post-change regime at location $l$, corresponding to $\nu_l=0$ and $\nu_{l'}=\infty$. Finally, $\mathbb{P}_{\infty}$ and $\mathbb{E}_{\infty}$ correspond to the  probability law and expectation under no change at either location (i.e., $\nu_l=\nu_{l'}=\infty$). 

Also, let us define the log-likelihood ratio as \(Z_{l}(t) \doteq\log \left(\frac{g_l(Y_{l,t})}{f_l(Y_{l,t})}\right)\), but we simply denote it as $Z_l$ since $Z_l(t)$ is i.i.d. across $t$ under $H_0^l$ and $H_1^l$. Let $D(f_l \mid \mid g_l)$ be the Kullback--Leibler (KL) divergence between the pre-change distribution $f_l$ and post-change distribution $g_l$.

\subsection{One-Sided SPRT Analysis}
\label{subsection:one-sided-sprt}

We begin by examining a one-sided sequential probability ratio test (SPRT) at a single location \(l\) under the \emph{pre-change} regime. Let \(\gamma_l>0\) be the detection threshold, and let \(T_l\) be the (random) stopping time defined in a standard SPRT sense as in \eqref{eqn:stopping_time}). 

Define
\[
  \psi_{\infty,l}
  \;\doteq\;
  \mathbb{P}_{\infty}\!\Bigl(W_l(T_l) \,\ge \gamma_l\Bigr),
\]
i.e., the probability that the SPRT statistic crosses \(\gamma_l\) under the no-change probability  law \(\mathbb{P}_{\infty}\) for location $l$. Also define
\[
  \beta_l
  \;\doteq\;
  \mathbb{P}_{0}^l\!\Bigl(W_l(T_l)\,\le0\Bigr),
\]
which is the probability that \(W_l(t)\) hits \(0\) in the post-change regime at location $l$, i.e., $\nu_l=0,\nu_{l'}=\infty$. 

Let $\beta_l^{(w)} = \mathbb{P}_0^l \big(W_l(T_l) \le 0 \mid W_l(0) = w \big)$ be the post-change probability of the statistic falling below 0 given that it starts at some intermediate value $w$. Similarly, let, $\psi_{\infty,l}^{(w)} = \mathbb{P}_{\infty}\bigl(W_l(T_l)\ge\gamma_l \mid W_l(0)=w\bigr)$ denote the probability of crossing $\gamma_l$, starting from an initial statistic $W_l(0)=w$ under no change. Following \cite[Chapter~7]{ross1995stochastic}, the expression for $\psi_{\infty,l}^{(w)}$ is
\[
  \psi_{\infty,l}^{(w)}
  \;=\;
  \frac{\,1 - e^{-w}\,}{\,e^{(\gamma_l - w)} - e^{-w}\,}
  \;+\;
  O(1),
\]
where \(O(1)\) indicates a bounded correction due to overshoot. Consequently, $\psi_{\infty,l}$ can be decomposed as:
\begin{align*}
  &\psi_{\infty,l} \\
  &= \int_{0}^{\gamma_l} \psi_{\infty,l}^{(w)}
     d\mathbb{P}_{\infty}(Z_l = w) +
   \mathbb{P}_{\infty}\bigl(Z_l \ge\gamma_l\bigr)\\
  &= \int_{w=0}^{\gamma_l}
      \frac{1-e^{-w}}{e^{(\gamma_l - w)} - e^{-w}}
      d\mathbb{P}_{\infty}(Z_l=w) +
   \mathbb{P}_{\infty}\bigl(Z_l \ge\gamma_l\bigr)
   + O(1).
\end{align*}
where the second term accounts for the case where the statistic jumps above $\gamma_l$ in a single step.

Let \(\mathbb{E}_{\infty}\!\bigl(T_l^{(w)}\bigr)\) be the expected stopping time starting from $W_l(0)=w$, still under the pre-change regime. Using \cite[Chapter~7]{ross1995stochastic}, we can write:
\begin{align}\label{eqn:E_infty_T_l_w}
  \mathbb{E}_{\infty}\!\bigl(T_l^{(w)}\bigr) = 
  \frac{(\gamma_l - w)(1-e^{-w}) - w\,\bigl(e^{(\gamma_l - w)} - 1\bigr)}
       {\bigl(e^{(\gamma_l-w)} - e^{-w}\bigr)\bigl(-D(f_l \mid\mid g_l)\bigr)} +  O(1),
\end{align}
where, $O(1)$ again denotes terms that are bounded independently of $\gamma_l$ or $w$~\cite{tartakovsky2014sequential,gut2009stopped}. In a similar way, we can denote the post-change expected stopping time starting from $W_l(0)=w$ as $\mathbb{E}_{0}^l\!\bigl(T_l^{(w)}\bigr)$. Such partial-cycle expectations arise if we begin an SPRT cycle from a non-zero state $w$.

Integrating over all possible initial states yields the full expected stopping time $\mathbb{E}_{\infty}(T_l)$ under no change:

\scriptsize
\begin{align*}
&\mathbb{E}_{\infty}(T_l) \\
&= 1 + \int_{0}^{\gamma_l} 
      \mathbb{E}_{\infty}\bigl(T_l^{(w)}\bigr)
      \; d\mathbb{P}_{\infty}(Z_l=w)\\
&= 1 + \int_{0}^{\gamma_l}
    \frac{
      (\gamma_l - w)(1-e^{-\,w}) - w\bigl(e^{(\gamma_l-w)} - 1\bigr)
    }{\bigl(e^{(\gamma_l-w)}-e^{-w}\bigr)\bigl(-D(f_l \mid\mid g_l)\bigr)}\; d\mathbb{P}_{\infty}(Z_l=w) + O(1).
\end{align*}
\normalsize
We add $1$ to account for the first observation.

\begin{figure*}[t!]
    \centering
    \begin{eqnarray}
    \arl_l &\geq&\frac{\frac{1-(u(\gamma_l))^n}{1-u(\gamma_l)} + \frac{1-(u(\gamma_{l'}))^n}{1-u(\gamma_{l'})}(u(\gamma_l))^n + \tau(u(\gamma_l))^n(1+(u(\gamma_{l'}))^n)}{1-(u(\gamma_l))(u(\gamma_{l'}))^n} +O(1)
    \label{eq:lower_bound}
    \end{eqnarray}
    \begin{eqnarray}
    \arl_l &\leq& \frac{(1+C_l)\frac{1-(v(\gamma_l))^n}{1-v(\gamma_l)} + (1+C_{l'}) \frac{1-(v(\gamma_{l'}))^n}{1-v(\gamma_{l'})}(v(\gamma_l))^n + \tau(v(\gamma_l))^n(1+(v(\gamma_{l'}))^n)}{1-(v(\gamma_l))^n (v(\gamma_{l'})^n)} +O(1)
    \label{eq:upper_bound}
    \end{eqnarray}
    \hrule
\end{figure*}

The following lemma provides simplified upper bounds on $\mathbb{E}_{\infty}\!\bigl(T_l^{(w)}\bigr)$ and $\mathbb{E}_{\infty}(T_l)$ under no change, which will be used in ARL and WADD analyses.

\begin{lemma}\label{lemma:bound_E0_tau_l_w_AND_E0_tau_l}
The quantities $\mathbb{E}_{\infty}(T_{l}^{(w)})$ and $\mathbb{E}_{\infty}(T_{l})$ are upper-bounded as:
\begin{align*}
\mathbb{E}_{\infty}\bigl(T_{l}^{(w)}\bigr) 
&\le \frac{w}{D(f_l\mid \mid g_l)} + O(1),\\
\mathbb{E}_{\infty}\bigl(T_{l}\bigr)
&\le 1
 +\frac{\mathbb{E}_{\infty}\!\bigl(Z_{l}\,\mathbf{1}_{\{Z_{l}\ge0\}}\bigr)}{D(f_l \mid \mid g_l)}
 +O(1),
\end{align*}
where $Z_{l}=Z_l(t)=\log \Bigl(\tfrac{g_l(Y_{l,t})}{\,f_l(Y_{l,t})}\Bigr)$.
\end{lemma}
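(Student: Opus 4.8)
The plan is to establish the two bounds in sequence: first the per-state estimate on $\mathbb{E}_\infty(T_l^{(w)})$, and then integrate it against the law of the first increment to obtain the bound on $\mathbb{E}_\infty(T_l)$. Both steps rely only on the exact expression already derived in \eqref{eqn:E_infty_T_l_w} together with elementary sign analysis.

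For the first bound I would argue directly from \eqref{eqn:E_infty_T_l_w}. Setting $a \doteq e^{\gamma_l - w} \ge 1$ and $b \doteq e^{-w} \le 1$ (both valid since $0 \le w \le \gamma_l$), the leading fraction can be rewritten as $\mathbb{E}_\infty(T_l^{(w)}) = \frac{w(a-1) - (\gamma_l - w)(1-b)}{(a-b)\,D(f_l\mid\mid g_l)} + O(1)$, where now the denominator is positive because $a - b > 0$ and $D(f_l\mid\mid g_l) > 0$. Since $(\gamma_l - w)(1-b) \ge 0$, dropping this subtracted nonnegative term gives the upper bound $\frac{w(a-1)}{(a-b)\,D(f_l\mid\mid g_l)}$; and because $b \le 1$ we have $\frac{a-1}{a-b} \le 1$, so the fraction is at most $\frac{w}{D(f_l\mid\mid g_l)}$. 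This yields $\mathbb{E}_\infty(T_l^{(w)}) \le \frac{w}{D(f_l\mid\mid g_l)} + O(1)$. The same leading term admits a transparent probabilistic reading that I would note as a sanity check: the SPRT stopping time is pathwise dominated by the first-passage time of the negative-drift walk $W_l$ (with $\mathbb{E}_\infty[Z_l] = -D(f_l\mid\mid g_l)$) to the lower barrier $0$, and Wald's identity applied to that one-sided passage time gives mean $\frac{w + \mathbb{E}_\infty[\text{overshoot}]}{D(f_l\mid\mid g_l)}$, with the bounded expected overshoot supplying the $O(1)$.

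For the second bound I would substitute the per-state estimate into the integral representation $\mathbb{E}_\infty(T_l) = 1 + \int_0^{\gamma_l} \mathbb{E}_\infty(T_l^{(w)})\, d\mathbb{P}_\infty(Z_l = w) + O(1)$. The $\frac{w}{D(f_l\mid\mid g_l)}$ contribution integrates to $\frac{1}{D(f_l\mid\mid g_l)} \int_0^{\gamma_l} w\, d\mathbb{P}_\infty(Z_l = w)$, which I would bound by extending the upper limit from $\gamma_l$ to $\infty$ — legitimate since the integrand is nonnegative — thereby recognizing $\mathbb{E}_\infty\!\bigl(Z_l \mathbf{1}_{\{Z_l \ge 0\}}\bigr)$. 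The uniform $O(1)$ remainder integrates against the probability measure $d\mathbb{P}_\infty(Z_l = w)$ to another $O(1)$, and adding the leading $1$ produces $\mathbb{E}_\infty(T_l) \le 1 + \frac{\mathbb{E}_\infty(Z_l \mathbf{1}_{\{Z_l \ge 0\}})}{D(f_l\mid\mid g_l)} + O(1)$.

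The step I expect to be most delicate is guaranteeing that the $O(1)$ in the per-state bound is uniform in both $w$ and $\gamma_l$; otherwise its integral against $d\mathbb{P}_\infty(Z_l = w)$ need not remain $O(1)$, and the final estimate would fail. This uniformity is precisely the statement that the expected overshoot of the random walk across its barrier is bounded independently of the starting state and the threshold, which holds under the standard integrability (e.g.\ finite second moment) and non-arithmetic assumptions on $Z_l$ that underlie the renewal-theoretic results cited for \eqref{eqn:E_infty_T_l_w}. I would make this uniformity explicit before interchanging the per-state bound with the integral.
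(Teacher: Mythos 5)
Your proposal is correct and follows essentially the same route as the paper: both bounds come from elementary sign analysis of the exact expression in \eqref{eqn:E_infty_T_l_w} (dropping a term of known sign to isolate $w/D(f_l\mid\mid g_l)$), followed by integrating the per-state bound and extending the integral's upper limit to $\infty$ to recognize $\mathbb{E}_{\infty}\bigl(Z_l\,\mathbf{1}_{\{Z_l\ge 0\}}\bigr)$. Your closing remark on the uniformity of the $O(1)$ overshoot correction in $w$ and $\gamma_l$ is a point the paper leaves implicit, and making it explicit is a genuine (if minor) improvement in rigor.
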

\begin{proof}
See Appendix~\ref{appendix:proof-of-lemma-bound_E0_tau_l_w_AND_E0_tau_l}.
\end{proof}
Before using these results for ARL or WADD bounds, we define the following ladder variables\cite{siegmund1985sequential}:
\begin{align*}
    T_l^- &= \min \{t \geq 1: W_l(t) \leq 0 \mid W_l(0) = 0\},  \\
    T_l^+ &= \min \{t \geq 1: W_l(t) > 0 \mid W_l(0) = 0\}.
\end{align*}
 We also define a stopping time
 \begin{align*}
     T_l^{\gamma_l} &= \min \{t \geq 1: W_l(t) \geq \gamma_l \mid W_l(0) = 0\}.
 \end{align*}

Under mild assumptions (finite second moments and strictly positive mean of log-likelihood ratio), we have that $\mathbb{E}_{\infty}(T_l^-)$ and $\mathbb{E}_0(T_l^+)$ are finite constants depending only on $f_l(\cdot)$ and $g_l(\cdot)$; see~\cite{lorden1970excess,kiefer1963asymptotically,siegmund1985sequential}. In particular, note that $T_l^+$ is geometrically distributed with parameter $q_l = \mathbb{P}_0^l(Z_l>0)$.

We now establish a few auxiliary lemmas to assist in analyzing the WADD and ARL bounds in subsequent subsection. 
\begin{lemma}\label{lemma:bound_beta}
The probability $\beta_l$ satisfies the following upper bound
\[
  \beta_l \leq 1 - q_l,
\]
where $q_l=\mathbb{P}_0^l(Z_l >0)$. 
\end{lemma}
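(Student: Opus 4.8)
The plan is to characterize $\beta_l$ through the single increment that triggers termination, rather than through a first-step decomposition (which, as already noted, only controls $\beta_l$ from below). Since $W_l(\cdot)\ge 0$ by the $\max$ in \eqref{eqn:CUSUM_statistic}, the event $\{W_l(T_l)\le 0\}$ coincides with $\{W_l(T_l)=0\}$, and the complementary stopping outcome is $\{W_l(T_l)\ge\gamma_l\}$. First I would read off the terminal increment from the recursion evaluated at time $T_l$: if $W_l(T_l)=0$ then $W_l(T_l-1)+Z_l(T_l)\le 0$, and because $W_l(T_l-1)\ge 0$ this forces $Z_l(T_l)\le 0$; if instead $W_l(T_l)\ge\gamma_l>0$ the truncation is inactive, so $Z_l(T_l)=W_l(T_l)-W_l(T_l-1)\ge\gamma_l-W_l(T_l-1)>0$. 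On the (a.s.) event that the SPRT terminates, the two stopping outcomes thus partition the sample space according to the sign of $Z_l(T_l)$, giving the clean identity
\[
  \beta_l=\mathbb{P}_0^l\bigl(Z_l(T_l)\le 0\bigr).
\]

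The remaining task is to compare the law of the terminal increment $Z_l(T_l)$ with that of a generic post-change increment $Z_l$, the target being $\mathbb{P}_0^l(Z_l(T_l)\le 0)\le\mathbb{P}_0^l(Z_l\le 0)=1-q_l$. I would approach this by conditioning on $\mathcal{F}_{T_l-1}$: writing $G(w)\doteq\mathbb{P}_0^l(Z_l\le -w)$, which is non-increasing in $w$ with $G(0)=1-q_l$, a terminal hit of $0$ from level $w=W_l(T_l-1)\ge 0$ occurs with conditional probability $G(w)\le G(0)=1-q_l$. The positive post-change drift $\mathbb{E}_0^l[Z_l]>0$ should drive the statistic away from a neighbourhood of $0$ and toward $\gamma_l$, so that the pre-stopping levels $W_l(T_l-1)$ that lead back to $0$ are pushed away from $w=0$, where $G$ attains its maximum $1-q_l$.

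The main obstacle will be turning this level-wise domination into the stated \emph{unconditional} inequality, because the stopping time $T_l$ is correlated with its terminal increment, and the returns to $0$ generated by positive excursions must be shown not to push $\beta_l$ above $1-q_l$; a careless summation over the disjoint events $\{T_l=t,\,W_l(t)=0\}$ recovers only a bound of the opposite sign. I expect the cleanest route is a ladder-variable/renewal decomposition built on $T_l^+$ and $T_l^-$ (using the assumed finiteness of $\mathbb{E}_0(T_l^+)$ and the geometric law of $T_l^+$ with parameter $q_l$), combined with a one-sided Wald/overshoot estimate to quantify the excursion-return contribution under the positive drift. Carrying this quantitative control through is the delicate step on which the whole argument hinges.
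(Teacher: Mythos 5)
Your attempt has a genuine gap: it never completes the proof. The first half is correct as far as it goes --- the terminal-increment identity $\beta_l=\mathbb{P}_0^l\bigl(Z_l(T_l)\le 0\bigr)$ and the level-wise bound $\mathbb{P}_0^l(Z_l\le -w)\le 1-q_l$ for $w\ge 0$ both hold --- but, as you concede yourself, summing the conditional bound over the events $\{T_l=t\}$ only yields $\beta_l\le(1-q_l)\,\mathbb{E}_0^l(T_l)$, which is vacuous. At that point your proposal turns into a research plan: you ``expect'' a ladder-variable decomposition combined with a Wald/overshoot estimate to work, and you explicitly leave open the ``delicate step on which the whole argument hinges.'' Since that step \emph{is} the entire quantitative content of the lemma, nothing has been proved.

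For comparison, the paper's own argument is three lines and uses neither the terminal-increment identity nor any overshoot estimate: from the inclusion $\{W_l(T_l)\le 0\}\subseteq\{T_l^-<\infty\}$ it gets $1-\beta_l\ge \mathbb{P}_0^l(T_l^-=\infty)$; it then invokes the ladder-epoch duality $\mathbb{E}_0^l(T_l^+)=1/\mathbb{P}_0^l(T_l^-=\infty)$ from \cite[Corollary~8.39]{siegmund1985sequential}; and it identifies $\mathbb{E}_0^l(T_l^+)=1/q_l$ via the geometric law noted in Section~\ref{subsection:one-sided-sprt}, so that $1/(1-\beta_l)\le 1/q_l$. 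You named exactly these ingredients (the geometric law of $T_l^+$ with parameter $q_l$ and finiteness of the ladder moments) but never assembled them into this chain. That said, your instinct that the excursion-return contribution is the delicate point is sound, and it is worth recording why: the duality identity applies to the random-walk ladder epoch (the statistic run \emph{without} the reset at zero), whose mean equals $1/\mathbb{P}_0^l(T_l^-=\infty)$ and in general exceeds $1/q_l$, whereas the geometric$(q_l)$ law describes the reset dynamics of $W_l$ (e.g., for $Z_l=\pm 1$ with $\mathbb{P}_0^l(Z_l=1)=q_l$ one has $\mathbb{P}_0^l(T_l^-=\infty)=2q_l-1<q_l$). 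Read with a single consistent definition of $T_l^+$, the chain delivers only $\beta_l\le 1-\mathbb{P}_0^l(T_l^-=\infty)$, and bridging the remaining distance to $1-q_l$ is precisely the step your plan flags; had you pushed your route through, you would have collided with the same reflected-versus-unreflected ladder-epoch distinction on which the paper's final equality silently relies. In short: the proposal is incomplete where it matters, and its suggested completion is not a routine application of the cited tools.
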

\begin{proof}
See Appendix~\ref{appendix:proof-of-lemma-bound_beta}. A similar statement appears in \cite[Lemma~1, Eq.~(36)]{xu2021optimum}.
\end{proof}
\begin{lemma}\label{lemma:Lower-bound-psi_0l_w}
A lower bound on $\psi_{\infty,l}^{(w)}$ is
\[
  \psi_{\infty,l}^{(w)} 
  \;\ge\;
  \tfrac12\,\exp\!\bigl[-D\bigl(f_l\mid\mid g_l\bigr)\bigr]
  -
  \exp\bigl(-w\bigr).
\]
\end{lemma}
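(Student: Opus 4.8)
The plan is to prove the bound directly from the closed-form expression for $\psi_{\infty,l}^{(w)}$ already recorded in Section~\ref{subsection:one-sided-sprt}, namely $\psi_{\infty,l}^{(w)} = \bigl(1-e^{-w}\bigr)\bigl(e^{(\gamma_l-w)}-e^{-w}\bigr)^{-1}+O(1)$, staying throughout under the pre-change law $\mathbb{P}_{\infty}$ (so that $\{e^{W_l(t)}\}$ is the relevant exponential martingale, since $\mathbb{E}_{\infty}[e^{Z_l}]=\int g_l=1$, which is what underlies that formula). The key structural observation is that the numerator already carries the factor $1-e^{-w}$, so the whole task reduces to bounding the reciprocal of the denominator below by a constant that does not depend on $\gamma_l$. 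First I would isolate this factorization, writing $\psi_{\infty,l}^{(w)} = (1-e^{-w})\,\bigl(e^{(\gamma_l-w)}-e^{-w}\bigr)^{-1}$ modulo the bounded overshoot term, whose treatment I defer to the last step.

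Next I would establish the $\gamma_l$-free constant lower bound $\bigl(e^{(\gamma_l-w)}-e^{-w}\bigr)^{-1}\ge \tfrac12\,e^{-D(f_l\mid\mid g_l)}$. This is the step in which the divergence enters: the inequality is equivalent to $e^{(\gamma_l-w)}-e^{-w}\le 2\,e^{D(f_l\mid\mid g_l)}$, i.e.\ it is controlled by how close the current statistic is to the threshold, holding whenever the gap $\gamma_l-w$ is at most $D(f_l\mid\mid g_l)+\log 2$. This is exactly the relevant regime, because the lemma is invoked in the WADD analysis for starting states $w$ that lie within $O\bigl(D(f_l\mid\mid g_l)\bigr)$ of $\gamma_l$. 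Combining the two pieces then gives $\psi_{\infty,l}^{(w)}\ge (1-e^{-w})\,\tfrac12 e^{-D(f_l\mid\mid g_l)} = \tfrac12 e^{-D(f_l\mid\mid g_l)} - \tfrac12 e^{-D(f_l\mid\mid g_l)}e^{-w}$, and since $\tfrac12 e^{-D(f_l\mid\mid g_l)}\le 1$ the subtracted term is at least $-e^{-w}$, which yields the claimed inequality.

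The hard part will be discharging the overshoot correction rigorously, i.e.\ the $O(1)$ term in the crossing formula, and verifying that the multiplicative constant can be taken as $\tfrac12$ uniformly once the $e^{-D(f_l\mid\mid g_l)}$ factor is separated out. I would control this using the finiteness of the ladder-height and overshoot moments already invoked for $T_l^{-}$ and $T_l^{+}$ (see \cite{lorden1970excess,siegmund1985sequential,kiefer1963asymptotically}): these guarantee that the overshoot beyond $\gamma_l$ has a bounded exponential moment independent of $\gamma_l$, so the correction can be absorbed into the constant without disturbing the $e^{-D(f_l\mid\mid g_l)}$ scaling. I would also record explicitly that the estimate is informative precisely where $\tfrac12 e^{-D(f_l\mid\mid g_l)}>e^{-w}$ (otherwise the right-hand side is negative and the bound holds trivially), which is consistent with its use for statistics starting near the threshold; for such $w$ the denominator bound of the second paragraph is exactly the condition to be checked, and this is where I expect the bulk of the technical care to lie.
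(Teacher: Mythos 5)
Your approach has a genuine gap, and it is not in the overshoot correction. The paper does not prove this lemma from the Wald crossing formula at all: it applies the Bretagnolle--Huber inequality to the pre- and post-change laws to obtain $\psi_{\infty,l}^{(w)} + \bar{\psi}_{0,l}^{(w)} \ge \tfrac12 e^{-D(f_l\mid\mid g_l)}$, and then bounds the \emph{post-change} no-crossing probability by the exponential-martingale estimate $\bar{\psi}_{0,l}^{(w)} \le e^{-w}$ (the analogue, under $\mathbb{P}_0^l$ with $\tilde{\theta}_l=-1$, of the bound $\psi_{\infty,l}^{(w)} \le e^{-(\gamma_l-w)}$ you use); subtracting gives the claim for every $w$ and every $\gamma_l$, with the threshold never appearing. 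Your central step --- lowering $\bigl(e^{(\gamma_l-w)}-e^{-w}\bigr)^{-1}$ by $\tfrac12 e^{-D(f_l\mid\mid g_l)}$ --- is false whenever $\gamma_l-w > D(f_l\mid\mid g_l)+\log 2$, as you yourself concede, and the restriction to that band is not licensed by how the lemma is used: in the WADD analysis the complementary bound $\bar{\psi}_{\infty,l'}^{(w)}\le 1+e^{-w}-\tfrac12 e^{-D(f_{l'}\mid\mid g_{l'})}$ is inserted into \eqref{eqn:bound_E0_tau^w} pointwise for \emph{every} $w\in(0,\gamma_{l'})$ before the supremum defining $w^\ast$ is taken, so you cannot assume the relevant starting states lie within $O\bigl(D(f_{l'}\mid\mid g_{l'})\bigr)$ of the threshold.

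The failure is structural rather than repairable by sharper overshoot control: for fixed $w$ the Wald formula itself gives $\psi_{\infty,l}^{(w)}\approx e^{-(\gamma_l-w)}\to 0$ as $\gamma_l\to\infty$, so no manipulation of that formula alone can produce a positive, $\gamma_l$-free lower bound at moderate $w$. The content of the lemma must come from an inequality coupling the two measures $\mathbb{P}_{\infty}$ and $\mathbb{P}_0^l$ --- which is exactly what Bretagnolle--Huber supplies --- combined with control of a post-change event, not of the pre-change crossing probability directly. One part of your write-up does survive: your remark that the bound is informative only when $\tfrac12 e^{-D(f_l\mid\mid g_l)} > e^{-w}$, and holds trivially otherwise, matches the paper's own discussion following the lemma (small $w$, and in particular $w=0$, are handled separately in the analysis via $\mathbb{E}_{\infty}(\tau_{l'})$).
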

\begin{proof}
See Appendix~\ref{appendix:proof-of-lemma-Lower-bound-psi_0l_w}.
\end{proof}

The lower bound on $\psi_{\infty,l}^{(w)}$ is negative for $w=0$ and small $w$. In our application (see \eqref{eqn:bound_E0_tau^w}), we are primarily concerned with the behavior for \(w>0\). For our purposes, this bound is sufficient to establish the desired asymptotic behavior in the WADD analysis.  In fact, when \(w=0\), the expected remaining sojourn time $\mathbb{E}_{\infty} (\tau_{l'}^{(w)})$ becomes  exactly \(\mathbb{E}_{\infty}\bigl(\tau_{l'}\bigr)\), which is handled separately in the analysis. 

\begin{lemma}\label{lemma:Lower-bound-E_0(T_l)}
An upper bound on the expected stopping time of one round of SPRT in \eqref{eqn:stopping_time} post-change, denoted by $\mathbb{E}_0^l\bigl(T_l\bigr)$, is given by
\[
  \mathbb{E}_0^l\bigl(T_l\bigr)\leq
  \frac{\,q_l\,\gamma_l}{D\bigl(g_l\mid\mid f_l\bigr)}+O(1).
\]
\end{lemma}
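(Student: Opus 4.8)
The plan is to decompose a single SPRT round by conditioning on its first log-likelihood increment and then to control the resulting interior excursion via Wald's identity, exploiting the strictly positive post-change drift $\mathbb{E}_0^l[Z_l] = D(g_l\mid\mid f_l) > 0$.

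First I would condition on the first increment $Z_l$. Starting from $W_l(0)=0$, the round terminates at $T_l=1$ whenever $Z_l\le 0$ (an event of probability $1-q_l$, since then $W_l(1)=0\notin(0,\gamma_l)$) and also whenever $Z_l\ge\gamma_l$ (a direct crossing). On the remaining event $\{0<Z_l<\gamma_l\}$ the statistic enters the interior at the level $W_l(1)=Z_l$, and $T_l=1+\sigma$, where $\sigma$ denotes the exit time of $(0,\gamma_l)$ for the walk started at level $w=Z_l$. Collecting the three cases gives
\[
\mathbb{E}_0^l(T_l) = 1 + \mathbb{E}_0^l\bigl[\sigma\,\mathbf{1}_{\{0<Z_l<\gamma_l\}}\bigr],
\]
so it suffices to bound the interior exit time $\mathbb{E}_0^l[\sigma \mid W_l(1)=w]$ uniformly for $w\in(0,\gamma_l)$.

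Next I would observe that before exiting $(0,\gamma_l)$ the reflection at $0$ is inactive, so $W_l$ coincides with the unreflected random walk $S(t)=w+\sum_s Z_l(s)$ up to the stopping time $\sigma$. Applying Wald's identity to $S$ yields $\mathbb{E}_0^l[S(\sigma)] = w + D(g_l\mid\mid f_l)\,\mathbb{E}_0^l[\sigma]$, valid because the positive drift together with the bounded continuation region forces $\mathbb{E}_0^l[\sigma]<\infty$ (geometric tails) and the finite-second-moment assumption gives $\mathbb{E}_0^l[|Z_l|]<\infty$. To bound $\mathbb{E}_0^l[S(\sigma)]$ I split on the exit side: on an upper exit $S(\sigma)\ge\gamma_l$ with overshoot $S(\sigma)-\gamma_l=O(1)$ controlled by the renewal-theoretic overshoot bound; on a lower exit $S(\sigma)\le 0$, which only lowers the expectation. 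Discarding the non-positive lower-exit contribution and using that the upper-exit probability is at most $1$ gives $\mathbb{E}_0^l[S(\sigma)]\le \gamma_l + O(1)$. Rearranging Wald's identity then produces $\mathbb{E}_0^l[\sigma\mid W_l(1)=w] \le \frac{\gamma_l-w}{D(g_l\mid\mid f_l)}+O(1) \le \frac{\gamma_l}{D(g_l\mid\mid f_l)}+O(1)$.

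Finally, substituting this uniform bound into the decomposition and using $\mathbb{P}_0^l(0<Z_l<\gamma_l)\le q_l$ gives
\[
\mathbb{E}_0^l(T_l) \le 1 + q_l\Bigl(\tfrac{\gamma_l}{D(g_l\mid\mid f_l)} + O(1)\Bigr) = \frac{q_l\,\gamma_l}{D(g_l\mid\mid f_l)} + O(1),
\]
as claimed. The main obstacle I anticipate is ensuring the $O(1)$ terms are genuinely uniform in both $\gamma_l$ and the random starting level $w=Z_l$: this requires the upper-boundary overshoot to be bounded independently of $\gamma_l$ and $w$ (a standard consequence of the finite-second-moment assumption already invoked for the ladder variables $T_l^\pm$), and the non-positive lower-exit term to be safely dropped rather than estimated. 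Some care is also needed to confirm that the integrability hypotheses of Wald's identity hold under $\mathbb{P}_0^l$ despite the reflection at the origin, which is why I isolate the unreflected walk $S$ on the interior before the first boundary contact.
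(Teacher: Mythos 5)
Your proof is correct and takes essentially the same route as the paper's: condition on the first log-likelihood increment, bound the expected remaining exit time from level $w$ by $\tfrac{\gamma_l - w}{D(g_l\mid\mid f_l)} + O(1)$, and absorb the restriction to the event $\{0 < Z_l < \gamma_l\}$ into the factor $q_l = \mathbb{P}_0^l(Z_l > 0)$. The only cosmetic difference is that you justify the per-$w$ bound directly via Wald's identity on the unreflected walk together with an overshoot estimate, whereas the paper obtains the same bound by simplifying the post-change analogue of its exact formula \eqref{eqn:E_infty_T_l_w}.
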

\begin{proof}
See Appendix~\ref{appendix:proof-of-lemma-Lower-bound-E_0(T_l)}.
\end{proof}

\subsection{ARL Analysis of LS-CD}
\label{subsection:ARL_analysis}

We now analyze the ARL2FA for each location \(l\) under the proposed LS-CD algorithm. Define \(\theta_l \neq 0\) such that
\[
  \mathbb{E}_{\infty}\bigl(e^{\theta_l Z_l}\bigr) = 1.
\]
Since the mean drift under \(\mathbb{P}_{\infty}\) is negative, i.e.\ \(\mathbb{E}_{\infty} (Z_l)<0\), it follows that \(\theta_l>0\) \cite{ross1995stochastic}. 
In particular, $\theta_l=1$ is a convenient choice that we use throughout this paper.

Recall from Subsection~\ref{subsection:one-sided-sprt} that
\(\psi_{\infty,l}=\mathbb{P}_{\infty}\bigl(W_l(T_l)\ge\gamma_l\bigr)\) 
and 
\(\mathbb{E}_{\infty}(T_l)\) 
capture the probability of a false alarm and the expected stopping time, respectively, for a single SPRT cycle at location~\(l\). Under LS-CD, the UAV may run multiple SPRT cycles at location $l$, leading to repeated opportunities for a false alarm.

We then write the ARL2FA at location \(l\) as:
\[
  \arl_l
  \;=\;
  \sum_{k=1}^{n_l}
    (\bar{\psi}_{\infty,l})^{k-1}\,\mathbb{E}_{\infty}(T_l)
  \;+\;
  (\bar{\psi}_{\infty,l})^{n_l}\,\bigl(\tau + \arl_{l'}\bigr),
\]
where, \(\bar{\psi}_{\infty,l} \doteq 1-\psi_{\infty,l}\). These two equations can be solved for \( \arl_A \) and \( \arl_B \). The sum models repeated SPRT cycles where no false alarm is detected in the first \( k-1 \) cycles but a false alarm occurs in the \( k \)-th cycle. Intuitively, with probability \((\bar{\psi}_{\infty,l})^{k-1}\) there is no false alarm in the first \(k-1\) cycles, and $\mathbb{E}_{\infty}(T_l)$ is the expected time of an SPRT cycle under no change. If no false alarm occurs in $n_l$ consecutive cycles, the UAV switches to $l'$, incurring travel time $\tau$ and an ARL2FA of \(\arl_{l'}\). 

We now provide explicit lower and upper bounds on $\arl_l$ in the following proposition.

\begin{theorem}\label{theorem:ARL bound-n>1}
The ARL2FA at location $l$ for LS-CD algorithm with $n_A=n_B=n$ satisfies the lower and upper bounds given in \eqref{eq:lower_bound} and \eqref{eq:upper_bound}, where $u(\gamma_l) = 1-e^{-\gamma_l}$, $C_l = 1+\frac{\mathbb{E}_{\infty}(Z_{l} \mathbb{I}_{\{Z_{l} \geq 0\}})}{D(f_l||g_l)}$, and $v(\gamma_l)=1-K_l e^{-\gamma_l}$, $K_l=q_l \exp\Big\{-\frac{ J_{l}}{q_l D(g_l||f_l)}\Big\}$ and $J_{l} = \int \left(\log \frac{g_l(Y_{l,t})}{f_l(Y_{l,t})} \right)^2 g_l(Y_{l,t}) dy_{l,t}$. 
\end{theorem}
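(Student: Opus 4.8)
The plan is to convert the coupled renewal recursion displayed just above the theorem into an explicit rational expression in the per-cycle survival probabilities and mean cycle lengths, and then to bound each ingredient separately. Writing $\bar{\psi}_{l}\doteq\bar{\psi}_{\infty,l}$ and specializing to $n_A=n_B=n$, the two scalar equations for $\arl_A$ and $\arl_B$ become linear once the geometric sums $\sum_{k=1}^{n}(\bar{\psi}_{l})^{k-1}=\frac{1-(\bar{\psi}_{l})^{n}}{1-\bar{\psi}_{l}}$ are evaluated. Substituting the $\arl_{l'}$ equation into the $\arl_{l}$ equation and solving (the term $(\bar{\psi}_{l})^{n}(\bar{\psi}_{l'})^{n}\arl_{l}$ reappears on the right, so one isolates $\arl_l$) yields the closed form
\begin{align*}
\arl_l=\frac{\frac{1-(\bar{\psi}_{l})^{n}}{1-\bar{\psi}_{l}}\,\mathbb{E}_\infty(T_l)+(\bar{\psi}_{l})^{n}\,\frac{1-(\bar{\psi}_{l'})^{n}}{1-\bar{\psi}_{l'}}\,\mathbb{E}_\infty(T_{l'})+\tau(\bar{\psi}_{l})^{n}\bigl(1+(\bar{\psi}_{l'})^{n}\bigr)}{1-(\bar{\psi}_{l})^{n}(\bar{\psi}_{l'})^{n}}.
\end{align*}
This step is routine algebra; the real content is that replacing $\bar{\psi}_{l}$ by $u(\gamma_l)$ or $v(\gamma_l)$, and $\mathbb{E}_\infty(T_l)$ by its extremes, produces valid bounds.

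Next I would pin down the per-cycle ingredients. Since $W_l$ coincides with the un-reflected log-likelihood walk until $T_l$, the false-alarm probability $\psi_{\infty,l}=\mathbb{P}_\infty(W_l(T_l)\ge\gamma_l)$ is the up-crossing probability of that walk, which I would control by an exponential change of measure: with $\theta_l=1$ the tilt $\frac{d\mathbb{P}_0^l}{d\mathbb{P}_\infty}\big|_{\mathcal{F}_t}=e^{W_l(t)}$ gives $\psi_{\infty,l}=e^{-\gamma_l}\,\mathbb{E}_0^l\bigl[e^{-R_l}\mathbf{1}_{\{W_l(T_l)\ge\gamma_l\}}\bigr]$, where $R_l\ge 0$ is the overshoot above $\gamma_l$. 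The upper bound $\psi_{\infty,l}\le e^{-\gamma_l}$, i.e.\ $\bar{\psi}_{l}\ge u(\gamma_l)$, is immediate since $e^{-R_l}\le 1$. For the matching lower bound $\psi_{\infty,l}\ge K_l e^{-\gamma_l}$, i.e.\ $\bar{\psi}_{l}\le v(\gamma_l)$, I would combine Lemma~\ref{lemma:bound_beta} (which gives $\mathbb{P}_0^l(\text{up-cross})=1-\beta_l\ge q_l$) with Jensen's inequality applied to $e^{-R_l}$ and a Lorden-type bound on the expected overshoot, $\mathbb{E}_0^l[R_l]\le J_l/D(g_l\mid\mid f_l)$, where $J_l=\mathbb{E}_0^l[Z_l^2]$ and $D(g_l\mid\mid f_l)=\mathbb{E}_0^l[Z_l]$; together these yield $\mathbb{E}_0^l[e^{-R_l}\mathbf{1}_{\{\text{up}\}}]\ge q_l\exp\{-J_l/(q_l D(g_l\mid\mid f_l))\}=K_l$. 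For the mean cycle length I would use the trivial $\mathbb{E}_\infty(T_l)\ge 1$ on the lower side and Lemma~\ref{lemma:bound_E0_tau_l_w_AND_E0_tau_l} on the upper side, giving $\mathbb{E}_\infty(T_l)\le C_l+O(1)$ (relaxed to the coefficient $1+C_l$ in \eqref{eq:upper_bound} up to $O(1)$).

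Finally I would invoke monotonicity. Viewing the closed form as a function of $\bigl(\bar{\psi}_{l},\bar{\psi}_{l'},\mathbb{E}_\infty(T_l),\mathbb{E}_\infty(T_{l'})\bigr)$, the numerator is a sum of terms with nonnegative coefficients, each nondecreasing in every argument (recall $\frac{1-x^{n}}{1-x}=1+x+\cdots+x^{n-1}$ and $x^{n}$ are increasing on $(0,1)$), while the denominator $1-(\bar{\psi}_{l})^{n}(\bar{\psi}_{l'})^{n}$ is positive and nonincreasing in $\bar{\psi}_{l}$ and $\bar{\psi}_{l'}$; hence $\arl_l$ is coordinatewise nondecreasing. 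Substituting the lower envelopes $\bar{\psi}_{l}\ge u(\gamma_l)$, $\bar{\psi}_{l'}\ge u(\gamma_{l'})$ and $\mathbb{E}_\infty(T_\cdot)\ge 1$ produces \eqref{eq:lower_bound}, and substituting the upper envelopes $\bar{\psi}_{l}\le v(\gamma_l)$, $\bar{\psi}_{l'}\le v(\gamma_{l'})$ and $\mathbb{E}_\infty(T_\cdot)\le 1+C_\cdot$ produces \eqref{eq:upper_bound}, in each case up to the stated $O(1)$ terms.

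The main obstacle is the lower bound on $\psi_{\infty,l}$: the change of measure is clean, but bounding $\mathbb{E}_0^l[e^{-R_l}\mathbf{1}_{\{\text{up}\}}]$ from below while retaining the explicit dependence on $J_l$ and $q_l$ requires the renewal/overshoot estimate above, since a crude bound would collapse the constant $K_l$. Everything else—the algebraic inversion of the recursion and the monotone substitution of the envelopes—is routine once these per-cycle bounds are secured.
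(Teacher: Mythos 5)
Your proposal is correct and follows essentially the same route as the paper's proof: the same algebraic inversion of the coupled recursion into the closed form \eqref{eqn:ARL-eqn-n>1}, the Wald/change-of-measure identity for $\psi_{\infty,l}\le e^{-\gamma_l}$, the combination of Lemma~\ref{lemma:bound_beta}, Jensen's inequality and Lorden's overshoot bound to get $\psi_{\infty,l}\ge K_l e^{-\gamma_l}$, and Lemma~\ref{lemma:bound_E0_tau_l_w_AND_E0_tau_l} together with $\mathbb{E}_{\infty}(T_l)\ge 1$ for the cycle lengths. The only (welcome) addition is that you spell out the coordinatewise monotonicity of the closed-form expression, which the paper uses implicitly when substituting the envelopes.
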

\begin{proof}
See appendix \ref{appendix:proof-of-arl-bound-n>1}.
\end{proof}

\begin{corollary}\label{corollary:special-arl-bound}
For the special case when the locations are symmetrical, i.e., $\gamma_A=\gamma_{B}=\gamma$, $n_A = n_B = n$, $f_{A}(\cdot)=f_{B}(\cdot)=f(\cdot)$ and $g_{A}(\cdot)=g_{B}(\cdot)=g(\cdot)$, we have,
\begin{eqnarray*}
    \liminf_{\gamma \to \infty}\frac{\arl_l}{e^{\gamma}} \geq 1+\frac{\tau}{n}
\end{eqnarray*}
and
\begin{eqnarray*}
\limsup_{\gamma \to \infty} \frac{\arl_l}{e^{\gamma}} \leq  \frac{1}{K}\bigg(1+C+\frac{\tau}{n}\bigg) 
\end{eqnarray*}
$\forall l \in \mathcal{L}$. Here, $K$ and $C$ are constants independent of the threshold as defined in Theorem \ref{theorem:ARL bound-n>1}. In the case of a single-location setup (i.e., $\tau=0$), the asymptotic result reduces to
\begin{align*}
    e^{\gamma}\leq \arl_l \leq  K' e^{\gamma}
\end{align*}
where $K' =\frac{1+C}{K}$ is a constant independent of the threshold.
\end{corollary}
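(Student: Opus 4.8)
The plan is to substitute the symmetry hypotheses directly into the two-sided bounds of Theorem~\ref{theorem:ARL bound-n>1} and extract the leading-order behaviour as $\gamma\to\infty$. Under symmetry we have $u(\gamma_l)=u(\gamma_{l'})=u\doteq 1-e^{-\gamma}$, $v(\gamma_l)=v(\gamma_{l'})=v\doteq 1-Ke^{-\gamma}$, and $C_l=C_{l'}=C$, so both \eqref{eq:lower_bound} and \eqref{eq:upper_bound} collapse into single-variable rational expressions in $u$ and $v$ respectively. The first simplification I would invoke is the geometric-sum identity $\frac{1-x^n}{1-x}=\sum_{k=0}^{n-1}x^k$, which shows that the factors $\frac{1-u^n}{1-u}$ and $\frac{1-v^n}{1-v}$ tend to $n$ as $u,v\to1$ (equivalently $\gamma\to\infty$). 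Combined with $u^n,v^n\to1$, the numerator of the lower bound then converges to $2(n+\tau)$ and that of the upper bound to $2[(1+C)n+\tau]$.

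The crux is the rate at which the denominators vanish. Writing $x=e^{-\gamma}\to0$, I would establish $\lim_{\gamma\to\infty}(1-u^{2n})e^{\gamma}=\lim_{x\to0}\frac{1-(1-x)^{2n}}{x}=2n$ and, analogously, $\lim_{\gamma\to\infty}(1-v^{2n})e^{\gamma}=2nK$; both limits are the derivative of $\xi\mapsto\xi^{2n}$ at $\xi=1$, scaled by the coefficient of $e^{-\gamma}$ ($1$ for $u$, $K$ for $v$). This is precisely the step that produces the $e^{\gamma}$ growth: the denominator shrinks at the exact rate $e^{-\gamma}$, so its reciprocal blows up like $e^{\gamma}$ with coefficient $1/(2n)$, respectively $1/(2nK)$. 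Dividing each bound by $e^{\gamma}$ and merging the numerator and denominator limits yields $\frac{2(n+\tau)}{2n}=1+\frac{\tau}{n}$ for the lower bound and $\frac{2[(1+C)n+\tau]}{2nK}=\frac{1}{K}\bigl(1+C+\frac{\tau}{n}\bigr)$ for the upper bound; the additive $O(1)$ corrections in \eqref{eq:lower_bound}--\eqref{eq:upper_bound} are bounded independently of $\gamma$ and hence vanish after division by $e^{\gamma}$. Taking $\liminf$ on the lower bound and $\limsup$ on the upper bound then delivers the two displayed inequalities.

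For the single-location specialization I would simply set $\tau=0$ in the two limits just obtained: the lower bound gives $\liminf_{\gamma\to\infty}\arl_l/e^{\gamma}\ge1$ and the upper bound gives $\limsup_{\gamma\to\infty}\arl_l/e^{\gamma}\le(1+C)/K=K'$, which sandwich $\arl_l$ between $e^{\gamma}$ and $K'e^{\gamma}$ to leading order. I would also remark that the $n$-dependence cancels here, since numerator and denominator both scale linearly in $n$, consistent with the fact that with no travel cost the switching parameter $n$ should not influence the asymptotic false-alarm rate.

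The main obstacle I anticipate is bookkeeping around the denominator limit: one must verify that, after imposing symmetry, the denominators in \eqref{eq:lower_bound} and \eqref{eq:upper_bound} genuinely take the product-of-powers form $1-u^{2n}$ and $1-v^{2n}$, as dictated by the recursion $\arl_l\bigl(1-\bar{\psi}_{\infty,l}^{\,n}\bar{\psi}_{\infty,l'}^{\,n}\bigr)=\cdots$ underlying the theorem, so that the derivative-at-one computation returns the coefficient $2n$ rather than some other value; an error here would alter the constant in $1+\frac{\tau}{n}$. Everything else reduces to elementary limits of rational functions of $u$ and $v$ together with the observation that the $O(1)$ terms are negligible on the $e^{\gamma}$ scale.
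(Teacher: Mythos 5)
Your proposal is correct and follows essentially the same route as the paper: substitute the symmetry assumptions into the bounds of Theorem~\ref{theorem:ARL bound-n>1} and evaluate elementary limits of rational functions of $e^{-\gamma}$, with the only cosmetic difference being that the paper first cancels the common factor $\bigl(1+u^n\bigr)$ to rewrite the lower bound exactly as $e^{\gamma}+\tau\frac{u^n}{1-u^n}$ (and analogously for the upper bound) before taking limits, whereas you compute the numerator limit and the denominator rate $1-u^{2n}\sim 2n e^{-\gamma}$ directly. Your reading of the denominator as $1-u^{2n}$ (rather than the form printed in \eqref{eq:lower_bound}) matches the recursion and the paper's own derivation, so the constants come out right.
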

\begin{proof}
See appendix \ref{appendix:proof-of-corollary}.
\end{proof}

\begin{figure*}[t!]
\centering
\begin{align} \label{eqn:temp_wadd}
   \wadd_l &= \max \Biggl\{ \underbrace{2 \tau+ \mathbb{E}_{\infty} (\tau_{l'})}_{S_1} \ , \ \underbrace{\tau+ \sup_{w \in (0,\gamma_{l'})} \mathbb{E}_{\infty} \bigl(\tau_{l'}^{(w)} \bigr)}_{S_2} \ , \nonumber\\
   &\qquad \qquad \underbrace{\sup_{\substack{w \in (0,\gamma_{l}) \\ 1 \leq m \leq n_l}} \Bigl[\mathbb{E}_0^l \bigl(T_{l}^{(w)}\bigr) + \beta_l^{(w)}\sum_{k=m+1}^{n_l} (\beta_l)^{k-(m+1)} \mathbb{E}_0^l(T_l) +\beta_l^{(w)}(\beta_l)^{n_l-m} \bigl(2 \tau+ \mathbb{E}_{\infty} (\tau_{l'}) \bigr)  \Bigr]}_{S_3} \Biggr\} \nonumber\\
   &\quad \ +\sum_{k=1}^{n_l} (\beta_l)^{k-1} \mathbb{E}_0^l(T_l)+(\beta_l)^{n_l} \widetilde{\wadd}_l 
\end{align}
\hrule
\end{figure*}

\begin{remark}
The lower and upper bounds presented in Corollary~\ref{corollary:special-arl-bound} for the ARL exhibit explicit dependence on the UAV's operational parameters, such as the travel time $\tau$ and the switching count $n$. Importantly, for the limiting case where $\tau = 0$ and $n = 1$ corresponding to a scenario without movement delays or switching constraints, the bounds reduce to those of standard CUSUM applied to a single process. While the lower bound to ARL2FA of standard CUSUM is well-studied, explicit upper bounds for the ARL2FA have not been previously reported in the literature. Our results address this gap by providing an explicit upper bound which is only a constant factor away from the lower bound.
\end{remark}

\subsection{WADD Analysis of LS-CD}
\label{subsection:WADD-analysis}

Next, we turn to the worst-case average detection delay (WADD) when a change occurs at location $l$ at time $\nu$. Let $\wadd_l$ denote this worst-case detection delay. Define $\tilde{\theta}_l\neq 0$ such that
\[
  \mathbb{E}_0^l\!\bigl(e^{\,\tilde{\theta}_l\,Z_l}\bigr)=1.
\]
Since the drift under $\mathbb{P}_0^l$ is positive (i.e.\ $\mathbb{E}_0^l(Z_l) > 0$), it follows that $\tilde{\theta}_l < 0$; see~\cite{ross1995stochastic}. In particular, $\tilde{\theta}_l=-1$ is a convenient choice that we use throughout this paper.

$\wadd_l$ is achieved if the UAV is at one of the following three states at time $\nu$:
\begin{enumerate}[label=\textit{(\alph*)}]
\item The UAV departs from $l$ exactly at time $\nu$.
\item The UAV is stationed at $l'$ at time $\nu$ in its first SPRT cycle, with partial statistic $W_{l'}(\nu)=w\in(0,\gamma_{l'})$.
\item The UAV is already at $l$ in the middle of the $m^{\text{th}}$ SPRT $(1\leq m\leq n_l)$ with partial statistic $W_l(\nu)=w>0$. We later show (Lemma~\ref{lemma:stochastic_arg}) that this scenario yields a \emph{stochastically smaller} delay than scenario (a) and hence can be omitted.
\end{enumerate}

Recall that the expected completion time of the ongoing SPRT at location $l'$ is denoted by $\mathbb{E}_{\infty}\big(T_{l'}^{(w)}\big)$ if $W_{l'}(\nu)=w$, and the probability that this random walk at $l'$ raises an alarm is $\psi_{\infty,l'}^{(w)}$. Now, in scenario (b), if at time~$\nu$ the UAV is at~$l'$ with $W_{l'}(\nu)=w$, then the mean \emph{remaining} sojourn time at $l'$ can be written as:
\begin{equation}\label{eqn:mean_rem_soj_time}
  \mathbb{E}_{\infty} \bigl(\tau_{l'}^{(w)}\bigr) =
  \mathbb{E}_{\infty}\bigl(T_{l'}^{(w)}\bigr)+
  \bar{\psi}_{\infty,l'}^{(w)}
  \sum_{k=1}^{\,n_{l'}-1}
    \Bigl(\bar{\psi}_{\infty,l'}\Bigr)^{k-1}\,
    \mathbb{E}_{\infty}\!\bigl(T_{l'}\bigr)
\end{equation}
On the other hand, the mean sojourn time at $l'$ under no change is
\begin{equation}\label{eqn:mean_soj_time}
  \mathbb{E}_{\infty} \bigl(\tau_{l'}\bigr)=
  \sum_{k=1}^{\,n_{l'}}
    \bigl(\bar{\psi}_{\infty,l'}\bigr)^{k-1}
    \mathbb{E}_{\infty} \!\bigl(T_{l'}\bigr).
\end{equation}
Now, suppose that the UAV fails to detect a change at location $l$ after $n_l$ consecutive attempts (i.e., the local statistic $W_l(t)$ returns to zero $n_l$ times without hitting $\gamma_l$). At that point, the UAV switches to location $l'$, thereby resetting the detection statistic at $l$. We denote by $\widetilde{\wadd}_l$ the additional mean detection delay under this full reset. Formally,
\begin{equation}\label{eqn:tilde_wadd}
  \widetilde{\wadd}_l =
  2\tau +  \mathbb{E}_{\infty}\bigl(\tau_{l'}\bigr) + 
  \sum_{k=1}^{n_l} \bigl(\beta_l\bigr)^{k-1}
    \mathbb{E}_0^l\bigl(T_l\bigr) +
  \bigl(\beta_l\bigr)^{n_l}\widetilde{\wadd}_l.
\end{equation}
In this expression, $2\tau$ is the round-trip travel overhead, $\mathbb{E}_{\infty}(\tau_{l'})$ is the pre-change expected sojourn time at $l'$, and $\mathbb{E}_0^l\bigl(T_l\bigr)$ is the expected stopping time at location $l$ post-change.

Combining the scenarios above, the worst-case average detection delay at location~$l$ can be decomposed as shown in \eqref{eqn:temp_wadd}.

\subsubsection*{Interpreting the Terms in \eqref{eqn:temp_wadd}}
\begin{itemize}
\item \textbf{Scenario (a):} 
  $S_1 = 2\tau + \mathbb{E}_{\infty}(\tau_{l'})$ arises if the UAV has \emph{just left} $l$ when the change occurs (time $\nu$). It must effectively travel back and forth, incurring a $2\tau$ overhead plus a full pre-change sojourn time at $l'$.

\item \textbf{Scenario (b):}
  $S_2 = \tau + \sup_{w \in (0,\gamma_{l'})}\,\mathbb{E}_{\infty}(\tau_{l'}^{(w)})$ represents the case the UAV is already at $l'$ with partial statistic $w\in(0,\gamma_{l'})$ and in its first SPRT. It completes the partial sojourn at $l'$, then travels $\tau$ time slots to $l$. The worst-case is captured by taking the maximum over all $w \in (0,\gamma_{l'})$.

\item \textbf{Scenario (c):} $S_3$ captures the possibility of being already at $l$ in the $m^{\text{th}}$ SPRT with partial statistic $w$. However, it can be shown to produce a stochastically smaller delay compared to (a) (see Lemma~\ref{lemma:stochastic_arg}). We still include it here in \eqref{eqn:temp_wadd}  for completeness, but the worst-case typically arises from scenario (a) or (b).

\item Finally, $\sum_{k=1}^{n_l}(\beta_l)^{k-1}\mathbb{E}_0^l(T_l)$ corresponds to repeated post-change SPRT attempts that eventually succeed in raising an alarm,  and $(\beta_l)^{n_l}\,\widetilde{\wadd}_l$ accounts for the full reset if the UAV switches away after $n_l$ consecutive SPRT failures, as described in \eqref{eqn:tilde_wadd}.
\end{itemize}

It is important to note that the expression for $S_2$ involves optimization over $w \in (0,\gamma_{l'})$. Obviously, if $w=0$, then the mean detection delay in Scenario (b) is exactly $S_1+ \sum_{k=1}^{n_l} (\beta_l)^{k-1} \mathbb{E}_0^l(T_l)+(\beta_l)^{n_l} \widetilde{\wadd}_l -\tau$, since the sojourn at location $l'$ is just starting in this case. On the other hand, if $w=\gamma_{l'}$, then the mean detection delay in Scenario (b) is only $\sum_{k=1}^{n_l} (\beta_l)^{k-1} \mathbb{E}_0^l(T_l)+(\beta_l)^{n_l} \widetilde{\wadd}_l + \tau$ where the UAV has just finished its sojourn at location $l'$. Since Scenario (a) produces higher mean detection delay than Scenario (b) for both $w=0$ and $w=\gamma_{l'}$, we can safely ignore $w=0$ and $w=\gamma_{l'}$ and retain the optimization under Scenario (b) only over $w \in (0,\gamma_{l'})$ (in the expression of $S_2$). The maximizer over $[0,\gamma_{l'}]$,  $w^*$, may or may not exist within $(0,\gamma_{l'})$ in this case. It is also noted that if $w^* \in \{0,\gamma_{l'}\}$, we will have $S_1>S_2$.   Similar logic has been applied to the expression of $S_3.$

The following Lemma establishes the dominance of $S_1$ over $S_3$ in \eqref{eqn:temp_wadd}.
\begin{lemma}\label{lemma:stochastic_arg}
\[S_3 \leq S_1.\]
\end{lemma}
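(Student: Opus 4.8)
The plan is to read the inequality $S_3\le S_1$ as what the surrounding text actually asserts: once the detection‑continuation terms common to all three scenarios in \eqref{eqn:temp_wadd} are folded in, the expected delay incurred in scenario (c) never exceeds that of scenario (a). Write $\Phi_l(w,m)$ for the true expected delay when, at the change instant, the UAV sits at $l$ in its $m$‑th SPRT cycle with statistic $w\in(0,\gamma_l)$, and $\Psi_l(j)$ for the expected delay of a run started fresh from cycle $j$ (statistic $0$); both are charged the full‑reset continuation $\widetilde{\wadd}_l$ the moment the UAV exhausts its budget of $n_l$ cycles and switches away. Conditioning on the current cycle gives the one‑step renewal identities
\[
  \Phi_l(w,m)=\mathbb{E}_0^l\!\bigl(T_l^{(w)}\bigr)+\beta_l^{(w)}\,\Psi_l(m+1),\qquad
  \Psi_l(j)=\mathbb{E}_0^l(T_l)+\beta_l\,\Psi_l(j+1),
\]
with boundary condition $\Psi_l(n_l+1)=\widetilde{\wadd}_l$. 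Since scenario (a) is exactly a full reset, its expected delay equals $\widetilde{\wadd}_l$, so the entire task is to prove $\Phi_l(w,m)\le\widetilde{\wadd}_l$ uniformly in $w$ and $m$.

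Solving the recursion for $\Psi_l(j)$ in closed form and using $\widetilde{\wadd}_l=\tfrac{S_1}{1-\beta_l^{\,n_l}}+\tfrac{\mathbb{E}_0^l(T_l)}{1-\beta_l}$ obtained from \eqref{eqn:tilde_wadd}, I would first record the elementary bound $\Psi_l(j)\le\widetilde{\wadd}_l$ for every $j\le n_l+1$, which reduces to $\tfrac{\mathbb{E}_0^l(T_l)}{1-\beta_l}\le\widetilde{\wadd}_l$ and is immediate because $S_1\ge0$. Substituting $\Psi_l(m+1)\le\widetilde{\wadd}_l$ into the first identity and writing $\mathbb{E}_0^l(T_l^{(w)})=(1-\beta_l^{(w)})A_l(w)$ with $A_l(w)\doteq\mathbb{E}_0^l(T_l^{(w)})/(1-\beta_l^{(w)})$, the goal collapses to the single statement $A_l(w)\le\widetilde{\wadd}_l$: indeed $\Phi_l(w,m)\le(1-\beta_l^{(w)})A_l(w)+\beta_l^{(w)}\widetilde{\wadd}_l$, and this is $\le\widetilde{\wadd}_l$ as soon as $A_l(w)\le\widetilde{\wadd}_l$.

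By Wald's identity, $A_l(w)$ is the mean time to an alarm for the auxiliary process that runs i.i.d.\ post‑change SPRT cycles, each restarted from level $w$, terminating on the first cycle that reaches $\gamma_l$. I would bound it by the restart‑from‑$0$ quantity $A_l(0)=\mathbb{E}_0^l(T_l)/(1-\beta_l)$ through a monotone coupling: driving both processes with the same increment sequence $\{Z_l(t)\}$, an induction on $t$ in which the increment step, the reflection/reset at $0$, and the higher restart level each preserve the order shows that the level‑$w$ process stays pointwise above the level‑$0$ process, hence reaches $\gamma_l$ no later. This gives $A_l(w)\le A_l(0)$ pathwise, while $A_l(0)\le\widetilde{\wadd}_l$ is again immediate from the closed form (switching only inserts extra $S_1$ delays into the level‑$0$ run). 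Chaining $A_l(w)\le A_l(0)\le\widetilde{\wadd}_l$ yields $\Phi_l(w,m)\le\widetilde{\wadd}_l$, and taking the supremum over $(w,m)$ finishes the proof.

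I expect the main obstacle to be conceptual rather than computational. The head start $w>0$ by itself does \emph{not} settle the claim, since in scenario (c) the UAV is already $m-1$ cycles into its budget and so switches away after fewer cycles than a fresh run, and an earlier switch is costly. What rescues the inequality is that both scenarios ultimately funnel into the same continuation $\widetilde{\wadd}_l$, which already charges the round‑trip overhead $S_1$ that scenario (a) pays up front; the renewal reduction $\Phi_l(w,m)\le(1-\beta_l^{(w)})A_l(w)+\beta_l^{(w)}\widetilde{\wadd}_l$ is precisely what lets this shared overhead absorb the cycle‑budget deficit. Getting the monotone coupling for $A_l(w)\le A_l(0)$ and the ordering $\Psi_l(j)\le\widetilde{\wadd}_l$ exactly right, so that the earlier‑switch penalty is provably dominated, is the delicate step, and it is there that I would concentrate the rigor.
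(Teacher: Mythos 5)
Your proof is correct, and it proves the same substantive claim as the paper, but by a genuinely more explicit route. The paper's proof is a two-link stochastic-dominance chain: it introduces the \emph{total} scenario delays $D_{m,w}$ and $D'$, asserts $D_{m,w}\preceq D_{m,0}\preceq D'$ with one-sentence justifications, and then reads the resulting expectation inequality as ``$S_3\le S_1$.'' So, like you, the paper is really proving the total-delay comparison (your $\Phi_l(w,m)\le\widetilde{\wadd}_l$) rather than a literal inequality between the displayed formulas: your opening reinterpretation is exactly the move the paper makes implicitly, and it is necessary, since $S_3$ read literally contains $\mathbb{E}_0^l\bigl(T_l^{(w)}\bigr)$, which grows with $\gamma_l$, while $S_1$ stays bounded. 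Where you differ is in the mechanics: you solve the renewal recursions for $\Psi_l(j)$, reduce everything to $A_l(w)\le A_l(0)\le\widetilde{\wadd}_l$, and supply a monotone coupling only for the first of these inequalities. This buys rigor precisely where the paper is thinnest: the middle link $D_{m,0}\preceq D'$ is not obvious as a pathwise claim, because scenario (c) starts with a depleted budget of $n_l-m+1$ remaining cycles versus the full $n_l$ available after the round trip, and the paper's justification (``immediate opportunity to raise an alarm'') does not confront this. Your bound $\Psi_l(j)\le\widetilde{\wadd}_l$, which you correctly reduce to $\mathbb{E}_0^l(T_l)/(1-\beta_l)\le\widetilde{\wadd}_l$, is exactly the missing accounting showing that the early-switch penalty is absorbed by the round-trip overhead already charged inside $\widetilde{\wadd}_l$. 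The only steps still owed full detail are the Wald/renewal identification of $A_l(w)$ with the alarm time of the reset-to-$w$ process and the pointwise-ordering induction for the coupling, both of which are routine.
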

\begin{proof}
See Appendix~\ref{appendix:proof-of-lemma-stochastic-arg}.
\end{proof}

In summary, regardless of the partial index $m$ or the value of $w > 0$, beginning detection at location~$l$ immediately upon the change is stochastically faster than either leaving $l$ at time $\nu$ or being at $l'$ with $w \in (0, \gamma_{l'})$. This ensures that $\wadd_l$ is not enlarged beyond the delays captured by scenarios (a) or (b) in \eqref{eqn:temp_wadd} and thus the modified $\wadd_l$ can be written as:
\begin{align} \label{eqn:wadd}
   \wadd_l &= \max \Bigl\{2 \tau+ \mathbb{E}_{\infty} (\tau_{l'}), \ \tau+ \sup_{w \in (0,\gamma_{l'})} \mathbb{E}_{\infty} \bigl(\tau_{l'}^{(w)} \bigr)\Bigr\} \nonumber\\
   &\ + \sum_{k=1}^{n_l} (\beta_l)^{k-1} \mathbb{E}_0^l(T_l)+(\beta_l)^{n_l} \widetilde{\wadd}_l
\end{align}

The following theorem establishes an upper bound on $\wadd_l$.

\begin{theorem}\label{theorem:WADD bound}
The worst-case average detection delay for the LS-CD algorithm satisfies the following bound under the assumption that $n_A=n_B=n$:
\begin{eqnarray*}
    \wadd_l \leq \frac{\gamma_l }{D(g_l||f_l)} + C' + O(1)
\end{eqnarray*}
\(\forall l \in \mathcal{L} \). 
Here, $C'$ is independent of the threshold $\gamma_l$ and is given by \( C' = \max\bigg\{C_1+\frac{\gamma_{l'}}{D(f_{l'}||g_{l'})},C_2\bigg\} \).
The constants $C_1$ and $C_2$ are defined as follows:

\footnotesize
\begin{align*}
    C_1 &=  \tau\bigg(\frac{1+\bar{q}_l^n}{1-\bar{q}_l^n}\bigg)+\left(1+\frac{\mathbb{E}_{\infty}(Z_{l'} \mathbbm{1}_{\{Z_{l'} \geq 0\}})}{D(f_{l'}||g_{l'})}+O(1)\right)\\
    &\qquad \qquad \qquad \quad \times \bigg(\frac{n\bar{q}_l^n}{1-\bar{q}_l^n} + (n-1)(1+e^{-w^*}-\frac{1}{2}e^{-D(f_{l'}||g_{l'})}\bigg)
\end{align*}
\normalsize
and 
\begin{align*}
C_2 = \frac{2 \tau+ n +\frac{n\mathbb{E}_{\infty}(Z_{l'} \mathbbm{1}_{\{Z_{l'} \geq 0\}})}{D(f_{l'}||g_{l'})}+O(1)}{1-\bar{q}_l^n}
\end{align*}
Here $\bar{q_l} = 1-q_l$ and $w^\ast = \argmax_{w \in (0,\gamma_{l'})}\mathbb{E}_{\infty} \bigl(\tau_{l'}^{(w)} \bigr)$ is assumed to exist. If $w^* \notin (0,\gamma_{l'})$, then 
\begin{equation*}
\wadd_l \leq \frac{\gamma_l}{D(g_l||f_l)} + C_2 + O(1).
\end{equation*}

\end{theorem}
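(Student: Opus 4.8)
The plan is to eliminate the recursion in \eqref{eqn:tilde_wadd} and feed the closed form back into \eqref{eqn:wadd}, so that the entire delay is reorganized into one $\gamma_l$-growing term plus a collection of $\gamma_l$-bounded terms. Solving \eqref{eqn:tilde_wadd} gives $\widetilde{\wadd}_l = \bigl(2\tau + \mathbb{E}_{\infty}(\tau_{l'}) + \sum_{k=1}^{n_l}(\beta_l)^{k-1}\mathbb{E}_0^l(T_l)\bigr)/(1-(\beta_l)^{n_l})$. Using $\sum_{k=1}^{n_l}(\beta_l)^{k-1} = (1-(\beta_l)^{n_l})/(1-\beta_l)$, the two $\beta_l$-dependent tails of \eqref{eqn:wadd} collapse:
\[
\sum_{k=1}^{n_l}(\beta_l)^{k-1}\mathbb{E}_0^l(T_l) + (\beta_l)^{n_l}\widetilde{\wadd}_l = \frac{\mathbb{E}_0^l(T_l)}{1-\beta_l} + \frac{(\beta_l)^{n_l}\bigl(2\tau + \mathbb{E}_{\infty}(\tau_{l'})\bigr)}{1-(\beta_l)^{n_l}},
\]
so that $\wadd_l = \max\{S_1,S_2\} + \tfrac{\mathbb{E}_0^l(T_l)}{1-\beta_l} + \tfrac{(\beta_l)^{n_l}(2\tau+\mathbb{E}_{\infty}(\tau_{l'}))}{1-(\beta_l)^{n_l}}$, with $S_1,S_2$ as in \eqref{eqn:temp_wadd}. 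This cleanly separates the single term responsible for the linear-in-$\gamma_l$ growth from a remainder that I must show is $C'+O(1)$.

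Next I would extract the dominant term. Lemma~\ref{lemma:Lower-bound-E_0(T_l)} gives $\mathbb{E}_0^l(T_l) \le q_l\gamma_l/D(g_l\|f_l) + O(1)$, while Lemma~\ref{lemma:bound_beta} gives $1-\beta_l \ge q_l$; since $1/(1-\beta_l)$ is a bounded constant, these combine to $\mathbb{E}_0^l(T_l)/(1-\beta_l) \le \gamma_l/D(g_l\|f_l) + O(1)$, which is exactly the leading term of the claimed bound.

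Then I would bound the residual $\max\{S_1,S_2\} + (\beta_l)^{n_l}(2\tau+\mathbb{E}_{\infty}(\tau_{l'}))/(1-(\beta_l)^{n_l})$ by $C'+O(1)$, handling the two arms of the maximum separately. Throughout I use $(\beta_l)^{n_l}\le\bar q_l^{\,n}$ (Lemma~\ref{lemma:bound_beta}) with the monotonicity of $x\mapsto x/(1-x)$, and bound the sojourn expansions \eqref{eqn:mean_soj_time}, \eqref{eqn:mean_rem_soj_time} termwise via Lemma~\ref{lemma:bound_E0_tau_l_w_AND_E0_tau_l}, giving $\mathbb{E}_{\infty}(\tau_{l'}) \le n\bigl(1 + \mathbb{E}_{\infty}(Z_{l'}\mathbbm{1}_{\{Z_{l'}\ge 0\}})/D(f_{l'}\|g_{l'})\bigr) + O(1)$. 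When $S_1$ dominates, the residual telescopes to $S_1/(1-(\beta_l)^{n_l}) \le C_2$. When $S_2$ dominates, I expand $S_2 = \tau + \mathbb{E}_{\infty}(\tau_{l'}^{(w^*)})$, bound the partial stopping time by $\mathbb{E}_{\infty}(T_{l'}^{(w^*)}) \le \gamma_{l'}/D(f_{l'}\|g_{l'}) + O(1)$ (using $w^*<\gamma_{l'}$), and control the remaining-sojourn factor $\bar\psi_{\infty,l'}^{(w^*)}$ through the overshoot-corrected lower bound $\psi_{\infty,l'}^{(w^*)} \ge \tfrac12 e^{-D(f_{l'}\|g_{l'})} - e^{-w^*}$ of Lemma~\ref{lemma:Lower-bound-psi_0l_w}. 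Grouping the $\tau$-coefficients via $\tau + 2\tau\bar q_l^{\,n}/(1-\bar q_l^{\,n}) = \tau(1+\bar q_l^{\,n})/(1-\bar q_l^{\,n})$ and factoring the common KL term $\bigl(1+\mathbb{E}_{\infty}(Z_{l'}\mathbbm{1}_{\{Z_{l'}\ge0\}})/D(f_{l'}\|g_{l'}) + O(1)\bigr)$ out of both $n\bar q_l^{\,n}/(1-\bar q_l^{\,n})$ and $(n-1)(1+e^{-w^*}-\tfrac12 e^{-D(f_{l'}\|g_{l'})})$ reconstructs exactly $C_1 + \gamma_{l'}/D(f_{l'}\|g_{l'})$. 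Taking the larger arm produces $C'$, and the boundary case $w^*\notin(0,\gamma_{l'})$ forces $S_1>S_2$, leaving only $C_2$ and hence the final displayed bound.

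I expect the main obstacle to be the $S_2$-dominant case: keeping the overshoot $O(1)$ corrections uniform in both $\gamma_{l'}$ and $w^*$ while certifying that $\gamma_{l'}/D(f_{l'}\|g_{l'})$ is the \emph{only} growing contribution, and then verifying that the finite remainder regroups exactly into $C_1$ rather than into a looser inequality. The bookkeeping that aligns the $\tau$-coefficients and the shared KL factor is precisely where an index or sign slip would be easiest to make.
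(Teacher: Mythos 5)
Your proposal is correct and follows essentially the same route as the paper's proof: solve the recursion for $\widetilde{\wadd}_l$, isolate the $\gamma_l$-growing term $\mathbb{E}_0^l(T_l)/(1-\beta_l)$ via Lemmas~\ref{lemma:Lower-bound-E_0(T_l)} and~\ref{lemma:bound_beta}, and split on whether $S_1$ or $S_2$ dominates, bounding the sojourn terms with Lemmas~\ref{lemma:bound_E0_tau_l_w_AND_E0_tau_l} and~\ref{lemma:Lower-bound-psi_0l_w} to recover $C_1+\gamma_{l'}/D(f_{l'}\mid\mid g_{l'})$ and $C_2$ exactly as in Appendix~\ref{appendix:proof-of-theorem_1}. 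Your preliminary collapse of the two $\beta_l$-tails into $\mathbb{E}_0^l(T_l)/(1-\beta_l)+(\beta_l)^{n}S_1/(1-(\beta_l)^{n})$ is only a cosmetic reorganization of the same algebra.
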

\begin{proof}
See appendix \ref{appendix:proof-of-theorem_1}. 
\end{proof}
\begin{remark}
    The $e^{-w^*}$ term in the expression of $C_1$ can be trivially upper bounded by $1$.
\end{remark}

The WADD bound in Theorem~\ref{theorem:WADD bound} retains the hallmark features of classical CUSUM: it scales linearly with the threshold~\(\gamma_l\) and inversely with the Kullback--Leibler divergence~\(D(g_{l}\mid\mid f_{l})\). An additive term~\(C'\), independent of~\(\gamma_l\), reflects the UAV's operational overhead, such as travel time~\(\tau\) and sojourn cycles at the other location~\(l'\). In particular, if the UAV sojourns at location~\(l'\) (with threshold~\(\gamma_{l'}\)) while the actual change occurs at~\(l\), it must first confirm the absence of change at~\(l'\). Even when no change is actually present at $l'$, this process incurs a detection delay of approximately~\(\tfrac{\gamma_{l'}}{\,D(f_{l'} \mid\mid g_{l'})} +O(1)\).

Moreover, in the special case when \(n=1\), \(\tau=0\) and \(\gamma_{l'}=0\) (i.e.\ a single-location setup with uninterrupted observations), our result reduces to the well-known linear upper bound for WADD of standard CUSUM.

\begin{remark}
    Though LS-CD has been designed for two locations, the algorithm and its performance analyses can easily be extended to multiple locations under a round-robin switching mechanism for location changing.
\end{remark}

\section{Numerical Results}\label{sec:numerical_results}
In this section, we present simulation results evaluating the trade-offs among detection delay, thresholds, and energy consumption for the \emph{LS-CD} algorithm.

\subsection{Setup and Parameter Choices}

We consider the observation distributions $f_l = \mathcal{N}(0,1)$ and $g_l = \mathcal{N}(2,1)$. The simulation parameters are as follows:
\begin{itemize}
  \item \textbf{Sensing energy} $E_s=1$ unit per time step,
  \item \textbf{Movement energy} $E_m=4$ units per time step,
  \item \textbf{Travel time} $\tau=3$ steps between locations,
  \item \textbf{Energy threshold} $\bar{E}=3$,
  \item \textbf{ARL constraints} $\mathrm{r}_A= 500,\;\mathrm{r}_B = 500$.
\end{itemize}
For each parameter tuple $(\gamma_A, \gamma_B, n)$ in LS-CD, we compute the corresponding $\arl_A$, $\arl_B$, and the average energy consumption using simulations under the no-change scenario. The worst-case average detection delay ($\wadd$) is computed using the LS-CD formula. We also verify feasibility by checking compliance with all constraints: energy consumption, $\arl_A$, and $\arl_B$.

\begin{figure}[ht!]
    \centering
    \begin{subfigure}[t]{\linewidth}
        \centering
        \includegraphics[width=0.75\linewidth]{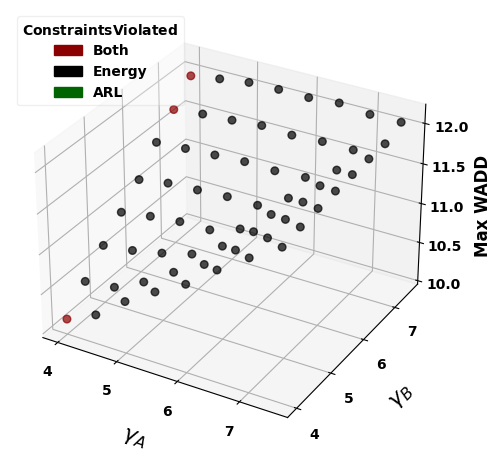}
        \caption{$n=1$}
        \label{fig:plot_n1}
    \end{subfigure}
    \vspace{2mm}
    \begin{subfigure}[t]{\linewidth}
        \centering
        \includegraphics[width=\linewidth]{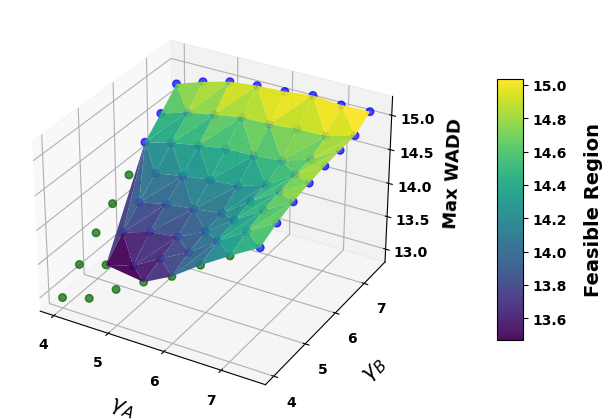}
        \caption{$n=3$}
        \label{fig:plot_n3}
    \end{subfigure}
    \vspace{2mm}
    \begin{subfigure}[t]{\linewidth}
        \centering
        \includegraphics[width=\linewidth]{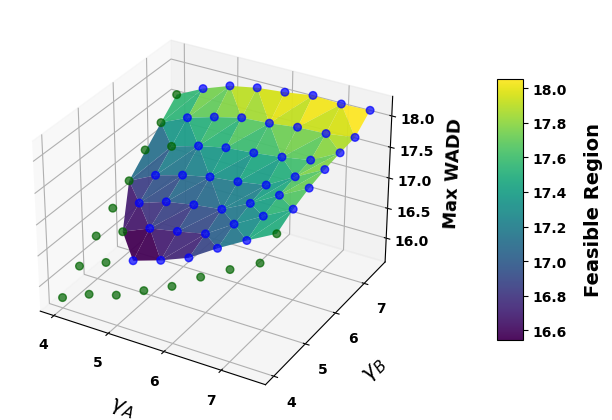}
        \caption{$n=5$}
        \label{fig:plot_n5}
    \end{subfigure}
    \caption{3D plots of $\max\{\wadd_A,\wadd_B\}$ vs.\ $(\gamma_A,\gamma_B)$ for LS-CD ($n=1,3,5$). Points are color-coded by whether they violate energy only (black), ARL (green), both (red), or are feasible (blue). A triangulated surface is drawn over feasible points.}
    \label{fig:surface-Plot}
\end{figure}
\vspace{-3mm}

\subsection{Results}
\begin{figure}[ht!]
    \centering
    \includegraphics[width=0.95\linewidth]{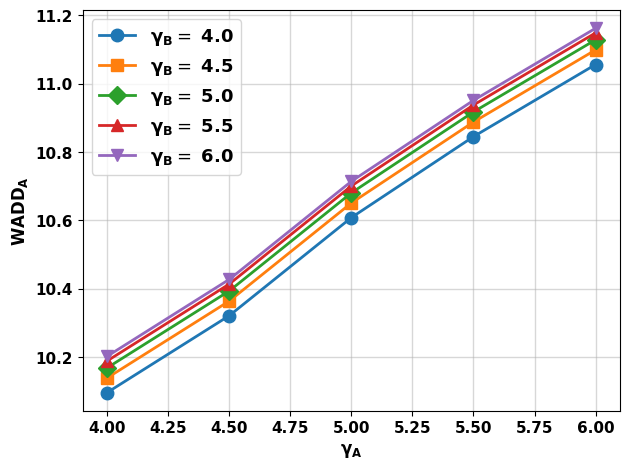}
    \vspace{-2mm}
    \caption{$\wadd_A$ vs.\ $\gamma_A$ for different $\gamma_B$ under LS-CD with $n=1$.}
    \label{fig:wadd_n1_Plot}
    \vspace{-2mm}
\end{figure}

\begin{figure}[t!]
    \centering
    \includegraphics[width=0.95\linewidth]{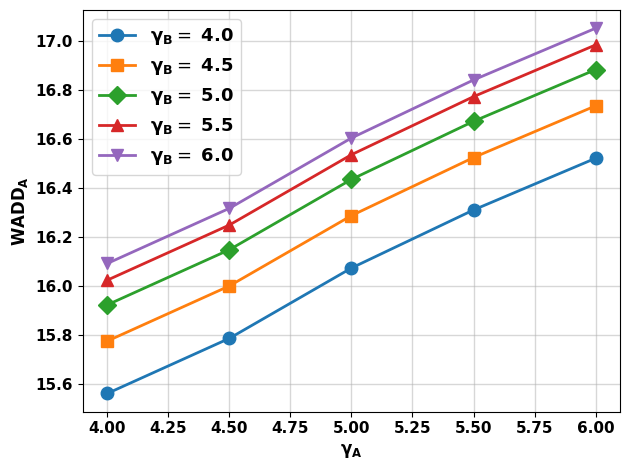}
    \vspace{-2mm}
    \caption{$\wadd_A$ vs.\ $\gamma_A$ for different $\gamma_B$ under LS-CD with $n=5$.}
    \label{fig:wadd_n5_Plot}
    \vspace{-2mm}
\end{figure}

Figure~\ref{fig:surface-Plot} demonstrates the objective function of our optimization problem (\ref{opt:CP}), $\max\{\wadd_A,\wadd_B\}$, across a grid of thresholds $(\gamma_A,\gamma_B)$ for $n\in \{1,3,5\}$. Each point is color-coded according to constraint satisfaction or violation. A triangulated surface delineates the feasible region—where all constraints in (\ref{opt:CP}) are satisfied. As $n$ increases, the energy consumption of the UAV decreases at the expense of higher detection delays. Specifically, we observe a favorable feasible region when $n=3$, as it effectively balances   WADD and constraint satisfaction.

Figures~\ref{fig:wadd_n1_Plot} and \ref{fig:wadd_n5_Plot} illustrate how the LS-CD worst-case detection delay at location~$A$, denoted $\wadd_A$, varies with threshold $\gamma_A$ for different values of $\gamma_B$ at $n=1$ and $n=5$, respectively. As expected, $\wadd_A$ increases monotonically with $\gamma_A$ because more evidence is needed to declare a change. Larger $\gamma_B$ also elevates $\wadd_A$, since the UAV can sojourn longer at $B$ while missing opportunities to detect at $A$. With $n=5$, the UAV remains longer at location $B$ under scenarios (a) and (b) of \eqref{eqn:temp_wadd}, further increasing $\wadd_A$ compared to the case when $n=1$.

\begin{figure}[ht!]
    \centering
    \includegraphics[width=0.95\linewidth]{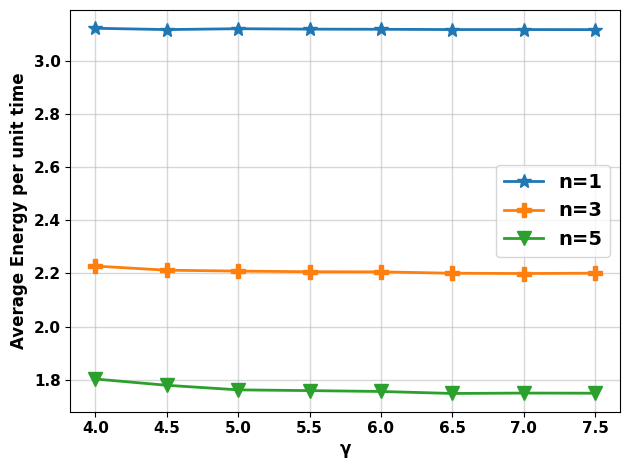}
    \vspace{-2mm}
    \caption{Average energy consumption vs.\ the common threshold $\gamma=\gamma_A=\gamma_B$ for~LS-CD with $n\in\{1,3,5\}$}.
    \label{fig:energy_all_Plot}
    \vspace{-2mm}
\end{figure}

In Figure~\ref{fig:energy_all_Plot}, we plot the average energy per unit time against a common threshold value, $\gamma = \gamma_A = \gamma_B$, comparing scenarios for $n\in \{1,3,5\}$. LS-CD with $n=1$ leads to frequent travel, thus consuming higher energy. Conversely, $n=5$ can curb movement cost significantly and, thus, energy expenditure. 

Although increasing $n$ reduces energy use, the marginal improvement lessens with larger $n$ values, highlighting the trade-off between detection delay and energy efficiency. Lower $n$ values tend to violate energy constraints, while higher $n$ values ensure compliance with the energy budget at the expense of longer detection delays. Thus, there exists a delicate balance between location switching frequency, threshold selection, and energy constraints. Our results indicate that LS-CD with a moderate switching threshold (e.g., $n=3$) provides near-optimal performance in many practical scenarios.

\section{Conclusion}\label{sec:conclusion}
In this paper, we have proposed the LS-CD algorithm for QCD in two locations monitored by a single UAV, with the goal of reducing WADD while meeting  constraints on ARL2FA and energy consumption rate of the UAV.  Theoretical bounds on ARL2FA and WADD of LS-CD were derived, revealing a number of interesting trade-offs. Interestingly, we have derived a new asymptotic upper bound to the ARL2FA of the standard CUSUM algorithm. However, the work can be extended to handle multiple challenges such as (i) multiple locations under surveillance by a single UAV, (ii) providing robustness against  uncertainties in UAV travel times and observation distributions, (iii) unknown pre and post change distributions,  and  (iv)  handling correlated observation across  different locations. We plan to address these challenges in our future research endeavours.

\appendices 

 \section{Proof of lemma \ref{lemma:bound_E0_tau_l_w_AND_E0_tau_l}}
\label{appendix:proof-of-lemma-bound_E0_tau_l_w_AND_E0_tau_l}
We have,
\small
\begin{align}\label{eqn:bound_expected stopping time_w}
\mathbb{E}_{\infty}({T_{l}^{(w)}}) &=\frac{(\gamma_{l}-w)(1-e^{- w})-w(e^{(\gamma_{l}-w)}-1)}{(e^{(\gamma_{l}-w)}-e^{- w}) (-D(f_l||g_l))} + O(1)\nonumber\\
&= \frac{(\gamma_{l}-w)(e^{ w}-1)-w(e^{\gamma_{l}}-e^{ w})}{(e^{ \gamma_{l}}-1) (-D(f_l||g_l))} + O(1)\nonumber\\
&=\frac{\gamma_{l}(e^{ w}-1)}{(e^{ \gamma_{l}}-1)(-D(f_l||g_l))} + \frac{w}{D(f_l||g_l)} + O(1)\nonumber\\
&\leq \frac{w}{D(f_l||g_l)}+O(1)
\end{align}
\normalsize
Also, we have, 
\begin{align}\label{eqn:bound_expected stopping time}
\mathbb{E}_{\infty}(T_{l})&= 1+\int_{w'=0}^{\gamma_{l}} \mathbb{E}_{\infty}(T_{l}^{(w')}) d \mathbb{P}_{\infty}(Z_{l}=w') \nonumber\\
&\leq 1+\int_{w'=0}^{\gamma_{l}} \frac{w'}{D(f_l||g_l)} d \mathbb{P}_{\infty}(Z_{l}=w')  + O(1) \nonumber\\
&\leq1+ \frac{1}{D(f_l||g_l)}\int_{w'=0}^{\infty} w' d \mathbb{P}_{\infty}(Z_{l}=w')  + O(1) \nonumber\\
&=1+\frac{\mathbb{E}_{\infty}(Z_{l} \mathbbm{1}_{\{Z_{l} \geq 0\}})}{D(f_l||g_l)}  + O(1)
\end{align}
where, $Z_{l}=Z_l(t)=\log \left(\frac{g_l(Y_{l,t})}{f_l(Y_{l,t})}\right)$ is the log-likelihood ratio. Equations \eqref{eqn:bound_expected stopping time_w} and \eqref{eqn:bound_expected stopping time} prove the lemma.

\section{Proof of lemma \ref{lemma:bound_beta}}
\label{appendix:proof-of-lemma-bound_beta}

Note that $\{W_l(T_l) \leq 0\} \implies \{T_l^- < \infty\}$. Also, $\{W_l(T_l) \leq 0\} \subseteq \{T_l^- < \infty\}$ \cite{xu2021optimum}.
 Hence,
\begin{eqnarray*}
    \mathbb{P}_0^l(W_l(T_l) \leq 0) \leq \mathbb{P}_0^l(T_l^- < \infty)
\end{eqnarray*}
We also have the following equality from \cite[Corollary 8.39]{siegmund1985sequential},
\begin{eqnarray*}
    \mathbb{E}_0^l(T_l^+) = \frac{1}{\mathbb{P}_0^l(T_l^- = \infty)}
\end{eqnarray*}
Therefore,
\begin{eqnarray*}
    \frac{1}{1-\beta_l} &=&\frac{1}{1-\mathbb{P}_0^l(W_l(T_l) \leq 0)}\\
    &\leq&\frac{1}{1-\mathbb{P}_0^l(T_l^- < \infty)}\\
    &=&\frac{1}{\mathbb{P}_0^l(T_l^- = \infty)}=\mathbb{E}_0^l(T_l^+)=\frac{1}{q_l}
\end{eqnarray*}
Using the above inequality, the proof is complete.

\section{Proof of lemma \ref{lemma:Lower-bound-psi_0l_w}}
\label{appendix:proof-of-lemma-Lower-bound-psi_0l_w}
Using the Bretagnolle-Huber inequality \cite[Chapter 14]{lattimore2020bandit}, we have:
\begin{align*}
    \psi_{\infty,l}^{(w)} + \bar{\psi}_{0,l}^{(w)} \geq \frac{1}{2}e^{-D(f_l||g_l)}
\end{align*}
Following the derivation of the bound $\psi_{\infty,l}^{(w)} \leq e^{-(\gamma_{l}-w)}$ from \cite{ross1995stochastic}, we can also derive a similar bound for  $\bar{\psi}_{0,l}^{(w)}$. Specifically, we have:
\[\bar{\psi}_{0,l}^{(w)} \leq e^{-w},\] 
Combining these results, we obtain:
\begin{align*}
    \psi_{\infty,l}^{(w)} \geq \frac{1}{2}e^{-D(f_l||g_l)} - e^{-w}
\end{align*}

\section{Proof of lemma \ref{lemma:Lower-bound-E_0(T_l)}}
\label{appendix:proof-of-lemma-Lower-bound-E_0(T_l)}
Using a parallel expression like that of $\mathbb{E}_{\infty}(T_l^{(w)})$ in \eqref{eqn:E_infty_T_l_w} and then the bound as presented in Lemma~\ref{lemma:bound_E0_tau_l_w_AND_E0_tau_l}, we can write:
\begin{align}\label{eqn:bound_E_0_T_l_w}
\mathbb{E}_0^l(T_l^{(w)}) \leq \frac{\gamma_l - w}{D(g_l||f_l)} + O(1)
\end{align}
Now, we write:
\begin{align*}
    &\mathbb{E}_0^l(T_l) \\
    &= 1+\int_{w=0}^{\gamma_l} \mathbb{E}_0^l(T_l^{(w)}) d\mathbb{P}_0^l(Z_l=w)  \\
    & \leq 1+ \int_{0}^{\gamma_l} \frac{\gamma_l - w}{D(g_l||f_l)} d\mathbb{P}_0^l(Z_l=w)+O(1) \\
    &= 1+ \frac{\mathbb{E}_0^l\big((\gamma_l - Z_l) \mathbbm{1}_{\{0 < Z_l \leq \gamma_l\}}\big)}{D(g_l||f_l)}+O(1)\\
    &=1+\mathbb{P}_0^l(0 < Z_l \leq \gamma_l) \frac{\mathbb{E}_0^l\big((\gamma_l - Z_l) \mid 0 < Z_l \leq \gamma_l\big)}{D(g_l||f_l)} +O(1)\\
    &\leq 1+\frac{\gamma_l}{D(g_l||f_l)} \mathbb{P}_0^l(
    0 < Z_l \leq \gamma_l)+O(1)\\
    &\leq 1+\frac{\gamma_l}{D(g_l||f_l)} \mathbb{P}_0^l(Z_l > 0)+O(1)\\
    &= \frac{\gamma_l}{D(g_l||f_l)}\cdot q_l +O(1)
\end{align*}

\section{Proof of Theorem \ref{theorem:ARL bound-n>1}}
\label{appendix:proof-of-arl-bound-n>1}
The average run length to false alarm at location $l \in \mathcal{L}$ is given by \eqref{eqn:ARL-eqn-n>1} when $n_A = n_B = n$.
\begin{figure*}[ht!]
\centering
\begin{eqnarray}\label{eqn:ARL-eqn-n>1}
    \arl_l &=&\sum_{k=1}^{n_l} (\bar{\psi}_{\infty,l})^{k-1} \mathbb{E}_{\infty}(T_l)+(\bar{\psi}_{\infty,l})^{n_l} (\tau+\arl_{l'}) \nonumber\\
    &=&\sum_{k=1}^{n} (\bar{\psi}_{\infty,l})^{k-1} \mathbb{E}_{\infty}(T_l)+(\bar{\psi}_{\infty,l})^{n} \bigg(\tau +\sum_{k=1}^{n} (\bar{\psi}_{\infty,l'})^{k-1} \mathbb{E}_{\infty}(T_{l'})+(\bar{\psi}_{\infty,l'})^{n} (\tau+\arl_{l})\bigg) \nonumber\\
    \implies \arl_l&=&\frac{\mathbb{E}_{\infty}(T_l)\sum\limits_{k=1}^{n} (\bar{\psi}_{\infty,l})^{k-1} + \mathbb{E}_{\infty}(T_{l'})(\bar{\psi}_{\infty,l})^{n}\sum\limits_{k=1}^{n} (\bar{\psi}_{\infty,l'})^{k-1} + \tau (\bar{\psi}_{\infty,l})^{n}(1+(\bar{\psi}_{\infty,l'})^{n})}{1-(\bar{\psi}_{\infty,l}\bar{\psi}_{\infty,l'})^{n}}\\
    \text{where $n_A=n_B=n$.} \nonumber
\end{eqnarray}
\label{fig:ARL_eqn_appendix}
\hrule
\end{figure*}

We first establish a bound on \(\psi_{\infty,l}\). Using results from \cite[Chapter 7]{ross1995stochastic}, we have:
\begin{align}\label{eqn:psi_bound}
    \mathbb{E}_{\infty}\big(e^{W_l(T_l)} \big) 
    &= \psi_{\infty,l} \mathbb{E}_{\infty}\big(e^{W_l(T_l)} \mid W_l(T_l) \geq \gamma_l \big) \nonumber\\
    &\quad + \bar{\psi}_{\infty,l}   \mathbb{E}_{\infty}\big(e^{W_l(T_l)} \mid W_l(T_l) \leq 0 \big) \nonumber\\
    &\geq \psi_{\infty,l}  \mathbb{E}_{\infty}\big(e^{W_l(T_l)} \mid W_l(T_l) \geq \gamma_l \big) \nonumber\\
    & \geq  \psi_{\infty,l} \ e^{\gamma_l}
\end{align}
We use the Martingale stopping theorem\cite{ross1995stochastic} to obtain \(\mathbb{E}_{\infty}\big(e^{ W_l(T_l)} \big) = 1\). Hence, \eqref{eqn:psi_bound} becomes:
\begin{align}
    \psi_{\infty,l} \cdot e^{\gamma_l} &\leq 1 \nonumber\\
    \psi_{\infty,l}  &\leq  e^{-\gamma_l} \nonumber\\
    \implies \bar{\psi}_{\infty,l} &\geq 1 - e^{-\gamma_l} \doteq u(\gamma_l).
\end{align}
We also find the following bound:
\begin{align*}
    \sum\limits_{k=1}^{n} (\bar{\psi}_{\infty,l})^{k-1}\geq  \sum\limits_{k=1}^{n} (u(\gamma_l))^{k-1} =\frac{1-(u(\gamma_l))^n}{1-u(\gamma_l)}
\end{align*}

Now, using all this information and the fact that $\mathbb{E}_{\infty}(T_l)\geq 1$ in \eqref{eqn:ARL-eqn-n>1}, we have:
\scriptsize
\begin{align}\label{eqn:ARL-lower-bound-n>1}
&\arl_l \nonumber\\
&\geq \frac{\frac{1-(u(\gamma_l))^n}{1-u(\gamma_l)} + \frac{1-(u(\gamma_{l'}))^n}{1-u(\gamma_{l'})}(u(\gamma_l))^n + \tau (u(\gamma_l))^n(1+(u(\gamma_{l'}))^n)}{1-(u(\gamma_l))^n(u(\gamma_{l'}))^n} +O(1)
\end{align}
\normalsize
Also,
\small
\begin{align*}
    \psi_{\infty,l}&= \mathbb{P}_{\infty}(W_l(T_l)\geq \gamma_l) \\
    &= \mathbb{E}_{\infty}[\mathbbm{1}_{\{W_l(T_l) \geq \gamma_l\}}]\\
    &\overset{(a)}{=}\mathbb{E}_0^l[e^{-W_l(T_l)}\mathbbm{1}_{\{W_l(T_l) \geq \gamma_l\}}]\\
    &=e^{-\gamma_l}\mathbb{E}_0^l[e^{-(W_l(T_l)- \gamma_l)} \mid W_l(T_l) \geq \gamma_l] (1-\beta_l)\\
    &\overset{(b)}{\geq} e^{-\gamma_l} \exp\left\{-\mathbb{E}_0^l[(W_l(T_l)- \gamma_l) \mid W_l(T_l) \geq \gamma_l]\right\} (1-\beta_l) \\
    &=e^{-\gamma_l} \exp\left\{-\frac{\mathbb{E}_0^l[(W_l(T_l)- \gamma_l) \mathbbm{1}_{\{W_l(T_l) \geq \gamma_l\}}]}{1-\beta_l}\right\} (1-\beta_l)\\
    &\overset{(c)}{\geq} e^{-\gamma_l} \exp\left\{-\frac{\mathbb{E}_0^l[W_l(T_l)- \gamma_l]}{1-\beta_l}\right\} (1-\beta_l) \\
    &\overset{(d)}{\geq} e^{-\gamma_l} e^{-\frac{ J_{l}}{q_l D(g_l||f_l)}}\cdot q_l  \\
    &= K_l e^{-\gamma_l}\\
    \implies \bar{\psi}_{\infty,l} &\leq 1- K_l e^{-\gamma_l} \doteq v(\gamma_l)
\end{align*}
\normalsize
where we use the change of measure argument in (a) and Jensen's inequality in (b). Note that for inequality (c), we can write $T_l$ defined in \eqref{eqn:stopping_time} in terms of ladder variables defined in Section~\ref{subsection:one-sided-sprt} as $T_l=\min\{T_l^{\gamma_l}, T_l^{-}\}$. Thus, we have \((W_l(T_l)- \gamma_l) \mathbbm{1}_{\{W_l(T_l) \geq \gamma_l\}} = (W_l(T_l)- \gamma_l) \mathbbm{1}_{\{T_l^{\gamma_l} < T_l^{-}\}} \leq (W_l(T_l)- \gamma_l) \). We also use Lemma \ref{lemma:bound_beta} to bound $\beta_l$. In inequality (d), we use a bound on the expected overshoot as given in \cite[Corollary 1]{lorden1970excess} to get $\mathbb{E}_0^l[W_l(T_l)-\gamma_l] \leq \frac{J_{l}}{D(g_l||f_l)}$ and $J_{l} = \int \left(\log \frac{g_l(Y_{l,t})}{f_l(Y_{l,t})} \right)^2 g_l(Y_{l,t}) dy_{l,t}$. Here, $K_l=q_l \exp\Big\{-\frac{ J_{l}}{q_l D(g_l||f_l)}\Big\}$. 

Using Lemma \ref{lemma:bound_E0_tau_l_w_AND_E0_tau_l}, we also have:
\begin{eqnarray*}
    \mathbb{E}_{\infty}(T_{l}) &\leq& 1+\frac{\mathbb{E}_{\infty}
    (Z_{l} \mathbb{I}_{\{Z_{l} \geq 0\}})}{D(f_l||g_l)} +O(1)\\
    &=&1+C_l+O(1)
\end{eqnarray*}
where, $C_l=\frac{\mathbb{E}_{\infty}(Z_{l} \mathbb{I}_{\{Z_{l} \geq 0\}})}{D(f_l||g_l)}$ is a constant independent of $\gamma_l$.

We also find the following bound:
\begin{eqnarray*}
    \sum\limits_{k=1}^{n} (\bar{\psi}_{\infty,l})^{k-1}&\leq& \sum\limits_{k=1}^{n} (v(\gamma_l))^{k-1}\\
    &=&\frac{1-(v(\gamma_l))^n}{1-v(\gamma_l)}
\end{eqnarray*}
where $v(\gamma_l)=1-K_l e^{-\gamma_l}$.

Therefore, using all these information in \eqref{eqn:ARL-eqn-n>1}, we have:
\tiny
\begin{align}\label{eqn:ARL-upper-bound-n>1}
    &\arl_l \nonumber\\
    &\leq \frac{(1+C_l)\frac{1-(v(\gamma_l))^n}{1-v(\gamma_l)} + (1+C_{l'}) \frac{1-(v(\gamma_{l'}))^n}{1-v(\gamma_{l'})}(v(\gamma_l))^n + \tau(v(\gamma_l))^n(1+(v(\gamma_{l'}))^n)}{1-(v(\gamma_l))^n (v(\gamma_{l'}))^n} \nonumber \\
    &\qquad + O(1)
\end{align}
\normalsize
This proves the theorem.

\section{Proof of corollary \ref{corollary:special-arl-bound}}
\label{appendix:proof-of-corollary}
In the symmetric case when $\gamma_l=\gamma_{l'}=\gamma$, \eqref{eqn:ARL-lower-bound-n>1} becomes:
\begin{align}\label{eqn:special-arl}
\arl_l &\geq \frac{\frac{1-(u(\gamma))^n}{1-u(\gamma)}(1 + (u(\gamma))^n) + \tau(u(\gamma))^n(1+(u(\gamma))^n)}{1-(u(\gamma))^{2n}} \nonumber\\
&=\frac{1}{1-u(\gamma)} + \frac{\tau (u(\gamma))^n}{1-(u(\gamma))^{n}} \nonumber\\
&=\frac{1}{1-(1-e^{-\gamma})} + \frac{\tau (1-e^{-\gamma})^n}{1-(1-e^{-\gamma})^{n}}  \nonumber\\
&=e^{\gamma} + \frac{\tau (1-e^{-\gamma})^n}{1-(1-e^{-\gamma})^{n}}
\end{align}
Now, let,
\begin{align*}
    h_1(\gamma)&=\frac{(1-e^{-\gamma})^n}{1-(1-e^{-\gamma})^{n}}\\
    &=\frac{1-ne^{-\gamma} + {\binom{n}{2}}e^{-2\gamma} +\ldots+(-1)^ne^{-n\gamma}}{ne^{- \gamma} - {\binom{n}{2}}e^{-2\gamma} +\ldots+(-1)^n e^{-n \gamma}}
    \end{align*}
Clearly,
\begin{eqnarray*}
    \lim_{\gamma \to \infty} \frac{h_1(\gamma)}{e^{\gamma}}
    &=&\frac{1}{n}
\end{eqnarray*}
Hence, 
\begin{eqnarray}\label{eqn:special-arl-lower-bound}
    \liminf_{\gamma \to \infty} \frac{\arl_l}{e^{\gamma}} \geq  1+\frac{\tau}{n}
\end{eqnarray}
Now, again for the symmetric case, we have from \eqref{eqn:ARL-upper-bound-n>1}:
\begin{align*}
&\arl_l \\
&\leq \frac{(1+C)\frac{1-(v(\gamma))^n}{1-v(\gamma)}(1+(v(\gamma))^n) + \tau(v(\gamma))^n(1+(v(\gamma))^n)}{1-(v(\gamma))^{2n}}\\
&=\frac{(1+C)}{1-v(\gamma)} + \frac{\tau(v(\gamma))^n}{1-(v(\gamma))^{n}}\\
&= \frac{(1+C)}{K e^{-\gamma}} + \frac{\tau (1-K e^{-\gamma})^n}{1-(1-K e^{-\gamma})^n}\\
&=  \frac{(1+C)e^{\gamma}}{K} + \frac{\tau (1-K e^{-\gamma})^n}{1-(1-K e^{-\gamma})^n}
\end{align*}
Now let,
\begin{eqnarray*}
    h_2(\gamma) &=& \frac{ (1-K e^{-\gamma})^n}{1-(1-K e^{-\gamma})^n} \\
    &=& \frac{1-nK e^{-\gamma} + \ldots + (-1)^n K^n e^{-n\gamma}}{nK e^{-\gamma} + \ldots + (-1)^n K^n e^{-n\gamma}}
\end{eqnarray*}
Clearly,
\begin{eqnarray*}
    \lim_{\gamma \to \infty} \frac{h_2(\gamma)}{e^{ \gamma}}
    &=&\frac{1}{nK}
\end{eqnarray*}
Hence, 
\begin{eqnarray}\label{eqn:special-arl-upper-bound}
    \limsup_{\gamma \to \infty} \frac{\arl_l}{e^{\gamma}} \leq  \frac{1}{K}\bigg(1+C + \frac{\tau}{n}\bigg) 
\end{eqnarray}

\section{Proof of Lemma~\ref{lemma:stochastic_arg}}
\label{appendix:proof-of-lemma-stochastic-arg}

\begin{figure*}[ht!]
\begin{align}\label{eqn:case1_bound}
    \wadd_l &\leq \tau \frac{1+\bar{q}_l^n}{1-\bar{q}_l^n}+\frac{\bar{q}_l^n}{1-\bar{q}_l^n} \left( n+\frac{n\mathbb{E}_{\infty}(Z_{l'} \mathbbm{1}_{\{Z_{l'} \geq 0\}})}{D(f_{l'}||g_{l'})}+O(1)\right)+\frac{\gamma_l \ q_l }{q_l D(g_l||f_l)}+O(1)+\frac{w^*}{D(f_{l'}||g_{l'})}+O(1) \nonumber\\
    &\quad +(n-1) \biggl( 1+ e^{-w^*}-\frac{1}{2}e^{-D(f_{l'}||g_{l'})} \biggr)\left(1+\frac{\mathbb{E}_{\infty}(Z_{l'} \mathbbm{1}_{\{Z_{l'} \geq 0\}})}{D(f_{l'}||g_{l'})} +O(1)\right)\nonumber\\
    &\leq\frac{\gamma_l}{D(g_l||f_l)} + \frac{\gamma_{l'}}{D(f_{l'}||g_{l'})} + C_1 +O(1)
\end{align}
where we use the bound on $\mathbb{E}_0^l(T_l)$ from Lemma~\ref{lemma:Lower-bound-E_0(T_l)}. Also, note that \(w^\ast \leq \gamma_{l'}\). Here,
\begin{align*}
    C_1 &=  \tau\bigg(\frac{1+\bar{q}_l^n}{1-\bar{q}_l^n}\bigg)+\left(1+\frac{\mathbb{E}_{\infty}(Z_{l'} \mathbbm{1}_{\{Z_{l'} \geq 0\}})}{D(f_{l'}||g_{l'})}+O(1)\right)\times \bigg(\frac{n\bar{q}_l^n}{1-\bar{q}_l^n} + (n-1)(1+e^{-w^*}-\frac{1}{2}e^{-D(f_{l'}||g_{l'})}\bigg)
\end{align*}
is independent of the threshold $\gamma_l$. Note that $w^\ast$ can at most be $\gamma_{l'}$, but we can upper bound $e^{-w^\ast}$ by $1$. \\
\hrule
\end{figure*}

We show that the random delay incurred due to scenario (c)---the UAV \emph{already} at $l$ in the middle of its $m^{\text{th}}$ SPRT, partial statistic $w>0$---is stochastically smaller than that due to scenario (a) (see \eqref{eqn:temp_wadd}). Stochastic ordering theory says that  a random variable $U$ is   stochastically smaller than $V$ (i.e., $U \preceq V$)   if $\mathbb{P}(U>t)\le \mathbb{P}(V>t)$ for all $t \in \mathbb{R}$.  

Let $D_{m,w}$ denote the random detection  delay starting from the $m^{\text{th}}$ SPRT cycle at location $l$ with $W_l(\nu)=w$. Let $D'$ denote the random detection delay if the UAV has just left location $l$ at time $\nu$. 

Obviously, $D_{m,w} \preceq D_{m,0}$ since, post change, starting from $w>0$ will require stochastically smaller time to raise an alarm compared to starting from $w=0$ in the same SPRT cycle. Also, $D_{m,0} \preceq D'$ since starting the $m$-th SPRT cycle at location $l$ at time $\nu$ provides an immediate opportunity to the UAV to raise an alarm at location $l$. Hence, 
\[D_{m,w} \preceq D'\]
and consequently $\mathbb{E}_0^l(D_{m,w}) \leq \mathbb{E}_0^l(D') \ \forall \ 1 \leq m \leq n, w \in (0,\gamma_l)$. Thus, $\sup_{1 \leq m \leq n,w \in (0,\gamma_l)} \mathbb{E}_0^l(D_{m,w}) \leq \mathbb{E}_0^l(D')$.

This shows that the worst-case delay when the UAV is already at location $l$ (scenario (c)) is no greater than the delay incurred when the UAV has just left $l$ (scenario (a)). In our notation, this means that $S_3 \leq S_1$.

\section{Proof of theorem \ref{theorem:WADD bound}}
\label{appendix:proof-of-theorem_1}
From \eqref{eqn:wadd} and \eqref{eqn:tilde_wadd} and using $n_l=n$  $\forall l \in \mathcal{L}$, we have:
\footnotesize
\begin{align*}   
   &\wadd_l \\
   &=\max\{2 \tau+ \mathbb{E}_{\infty} (\tau_{l'}), \tau+ \sup_{w \in (0,\gamma_{l'})} \mathbb{E}_{\infty} (\tau_{l'}^{(w)}) \}  + \sum_{k=1}^{n} (\beta_l)^{k-1} \mathbb{E}_0^l(T_l)\\
   &\quad +(\beta_l)^{n}\left( \frac{2 \tau+ \mathbb{E}_{\infty} (\tau_{l'})+ \sum_{k=1}^{n} (\beta_l)^{k-1} \mathbb{E}_0^l(T_l)}{1-(\beta_l)^{n}}\right)\\
   &=\max\{S_1,S_2\} + \sum_{k=1}^{n} (\beta_l)^{k-1} \mathbb{E}_0^l(T_l) \\
   &\quad + (\beta_l)^{n}\left( \frac{2 \tau+ \mathbb{E}_{\infty} (\tau_{l'})+ \sum_{k=1}^{n} (\beta_l)^{k-1} \mathbb{E}_0^l(T_l)}{1-(\beta_l)^{n}}\right)
\end{align*}
\normalsize
Now we consider two cases as follows:\\

$\textsc{Case } 1: \text{If } S_2 > S_1$,
then we have,
\footnotesize
\begin{align} \label{eqn:WADD_case_1} 
&\wadd_l \nonumber\\
&=  \frac{\tau (1 + (\beta_l)^n)}{1 - (\beta_l)^n} +  \frac{\mathbb{E}_{\infty}(\tau_{l'}) (\beta_l)^n}{1 - (\beta_l)^n}  + \frac{\mathbb{E}_0^l(T_l) \sum\limits_{k=1}^{n} (\beta_l)^{k-1}}{1 - (\beta_l)^n} + \mathbb{E}_{\infty}(\tau_{l'}^{(w^*)})
\end{align}
\normalsize

Note that $\frac{\sum\limits_{k=1}^{n} (\beta_l)^{k-1}}{1 - (\beta_l)^n} = \frac{1}{1 - \beta_l}$.

We now find a bound on each term. We know,
\begin{equation*}
 \mathbb{E}_{\infty} (\tau_{l'}) =  \sum_{k=1}^{n} (\bar{\psi}_{\infty,l'})^{k-1} \mathbb{E}_{\infty}(T_{l'}) 
\end{equation*}

Using $\sum_{k=1}^{n} (\bar{\psi}_{\infty,l'})^{k-1} \leq n$ and using Lemma \ref{lemma:bound_E0_tau_l_w_AND_E0_tau_l}, we get a bound on $\mathbb{E}_{\infty} (\tau_{l'})$ as:
\begin{eqnarray}\label{eqn:bound_E0_tau}
    \mathbb{E}_{\infty} (\tau_{l'}) \leq n+\frac{n \mathbb{E}_{\infty}(Z_{l'} \mathbbm{1}_{\{Z_{l'} \geq 0\}})}{D(f_{l'}||g_{l'})} +O(1)
\end{eqnarray}

Further, we have:
\begin{equation*}
 \mathbb{E}_{\infty} (\tau_{l'}^{(w)}) = \mathbb{E}_{\infty}(T_{l'}^{(w)})  + \bar{\psi}_{\infty,l'}^{(w)} \sum_{k=1}^{n-1} (\bar{\psi}_{\infty,l'})^{k-1} \mathbb{E}_{\infty}(T_{l'}) 
\end{equation*}
Using Lemma \ref{lemma:bound_E0_tau_l_w_AND_E0_tau_l}, we have a bound on $\mathbb{E}_{\infty}(T_{l'}^{(w)})$ and $\mathbb{E}_{\infty}(T_{l'})$.
Now, by Lemma \ref{lemma:Lower-bound-psi_0l_w},
\begin{eqnarray*}
    \bar{\psi}_{\infty,l'}^{(w)} &\leq& 1+ e^{-w}-\frac{1}{2}e^{-D(f_{l'}||g_{l'})} 
\end{eqnarray*}
Also, $\sum_{k=1}^{n-1} (\bar{\psi}_{\infty,l'})^{k-1} \leq n-1$. 

\noindent Therefore,

\scriptsize
\begin{align}\label{eqn:bound_E0_tau^w}
&\mathbb{E}_{\infty} (\tau_{l'}^{(w)}) \nonumber\\
&\leq \frac{w}{D(f_{l'}||g_{l'})} +O(1)\nonumber\\
&+(n-1)\biggl( 1+ e^{-w}-\frac{1}{2}e^{-D(f_{l'}||g_{l'})} \biggr)\left(1+\frac{\mathbb{E}_{\infty}(Z_{l'} \mathbbm{1}_{\{Z_{l'} \geq 0\}})}{D(f_{l'}||g_{l'})}+O(1)\right) 
\end{align}
\normalsize
Using Lemma \ref{lemma:bound_beta}, we get, $1-(\beta_l)^n\geq 1-(1-q_l)^n$.

Also let, $\bar{q}_l=1-q_l$.
Using Lemma \ref{lemma:bound_beta} and the bounds from \eqref{eqn:bound_E0_tau} and \eqref{eqn:bound_E0_tau^w} in \eqref{eqn:WADD_case_1}, we get the upper bound as shown in \eqref{eqn:case1_bound}.

$\textsc{Case } 2:  \text{If } S_2<S_1$,
then,
\begin{eqnarray}\label{eqn:WADD_case_2}
   \wadd_l = \frac{ 2 \tau+ \mathbb{E}_{\infty}(\tau_{l'})+\mathbb{E}_0^l(T_l)\sum\limits_{k=1}^{n} (\beta_l)^{k-1}}{1-(\beta_l)^{n}}
\end{eqnarray}

Using \eqref{eqn:bound_E0_tau}, Lemma \ref{lemma:bound_beta} and the bound on $\mathbb{E}_0^l(T_l)$ in \eqref{eqn:WADD_case_2}, we have,
\scriptsize
\begin{align}\label{eqn:case2_bound}
    &\wadd_l \nonumber\\
    &\leq \frac{ 2 \tau+ n+\frac{n\mathbb{E}_{\infty}(Z_{l'} \mathbbm{1}_{\{Z_{l'} \geq 0\}})}{D(f_{l'}||g_{l'})}+O(1)+\bigg(\frac{\gamma_l \ q_l}{q_l D(g_l||f_l)}+O(1)\bigg)\left(1-\bar{q}_l^n\right)}{ 1-\bar{q}_l^n} \nonumber\\
    &=\frac{\gamma_l}{D(g_l||f_l)} + C_2 + O(1)
\end{align}
\normalsize
where,
\begin{align*}
    C_2 = \frac{2 \tau+ n +\frac{n\mathbb{E}_{\infty}(Z_{l'} \mathbbm{1}_{\{Z_{l'} \geq 0\}})}{D(f_{l'}||g_{l'})}+O(1)}{1-\bar{q}_l^n}
\end{align*}
is a constant independent of the threshold.

Therefore \eqref{eqn:case1_bound} and \eqref{eqn:case2_bound} give the bounds on $\wadd_l$ at the two locations. Finally combining case $1$ and case $2$, we have:
\begin{align} 
   \wadd_l &\leq \frac{\gamma_l}{D(g_l||f_l)} + C'+O(1)
\end{align}
where, 
\begin{align*}
    C' = \max\{C_1 + \frac{\gamma_{l'}}{D(f_{l'}||g_{l'})},C_2\}
\end{align*}
which gives the final bound and proves the theorem. 

If $w^* \notin (0,\gamma_{l'})$, then $S_1>S_2$ and hence $\wadd_l \leq \frac{\gamma_l}{D(g_l||f_l)} + C_2 + O(1).$

\bibliographystyle{unsrt}
\bibliography{reference.bib}

\begin{thebibliography}{10}

\bibitem{poor2008quickest}
H~Vincent Poor and Olympia Hadjiliadis.
\newblock {\em Quickest detection}.
\newblock Cambridge University Press, 2008.

\bibitem{shiryaev1963optimum}
Albert~N Shiryaev.
\newblock On optimum methods in quickest detection problems.
\newblock {\em Theory of Probability \& Its Applications}, 8(1):22--46, 1963.

\bibitem{lorden1971procedures}
Gary Lorden.
\newblock Procedures for reacting to a change in distribution.
\newblock {\em The annals of mathematical statistics}, pages 1897--1908, 1971.

\bibitem{moustakides1986optimal}
George~V Moustakides.
\newblock Optimal stopping times for detecting changes in distributions.
\newblock {\em the Annals of Statistics}, 14(4):1379--1387, 1986.

\bibitem{page1954continuous}
Ewan~S Page.
\newblock Continuous inspection schemes.
\newblock {\em Biometrika}, 41(1/2):100--115, 1954.

\bibitem{lai1998information}
Tze~Leung Lai.
\newblock Information bounds and quick detection of parameter changes in stochastic systems.
\newblock {\em IEEE Transactions on Information theory}, 44(7):2917--2929, 1998.

\bibitem{lai2010sequential}
Tze~Leung Lai and Haipeng Xing.
\newblock Sequential change-point detection when the pre-and post-change parameters are unknown.
\newblock {\em Sequential analysis}, 29(2):162--175, 2010.

\bibitem{veeravalli2001decentralized}
Venugopal~V Veeravalli.
\newblock Decentralized quickest change detection.
\newblock {\em IEEE Transactions on Information theory}, 47(4):1657--1665, 2001.

\bibitem{tartakovsky2004change}
Alexander~G Tartakovsky and Venugopal~V Veeravalli.
\newblock Change-point detection in multichannel and distributed systems with applications.
\newblock {\em STATISTICS TEXTBOOKS AND MONOGRAPHS}, 173:339--370, 2004.

\bibitem{mei2010efficient}
Yajun Mei.
\newblock Efficient scalable schemes for monitoring a large number of data streams.
\newblock {\em Biometrika}, 97(2):419--433, 2010.

\bibitem{xie2013sequential}
Yao Xie and David Siegmund.
\newblock Sequential multi-sensor change-point detection.
\newblock In {\em 2013 Information theory and applications workshop (ITA)}, pages 1--20. IEEE, 2013.

\bibitem{fellouris2017multistream}
Georgios Fellouris, George~V Moustakides, and Venu~V Veeravalli.
\newblock Multistream quickest change detection: Asymptotic optimality under a sparse signal.
\newblock In {\em 2017 IEEE International Conference on Acoustics, Speech and Signal Processing (ICASSP)}, pages 6444--6447. IEEE, 2017.

\bibitem{tartakovsky2014sequential}
Alexander Tartakovsky, Igor Nikiforov, and Michele Basseville.
\newblock {\em Sequential analysis: Hypothesis testing and changepoint detection}.
\newblock CRC press, 2014.

\bibitem{banerjee2015data}
Taposh Banerjee and Venugopal~V Veeravalli.
\newblock Data-efficient minimax quickest change detection in a decentralized system.
\newblock {\em Sequential Analysis}, 34(2):148--170, 2015.

\bibitem{rovatsos2021quickest}
Georgios Rovatsos, George~V Moustakides, and Venugopal~V Veeravalli.
\newblock Quickest detection of moving anomalies in sensor networks.
\newblock {\em IEEE Journal on Selected Areas in Information Theory}, 2(2):762--773, 2021.

\bibitem{wang2015large}
Yuan Wang and Yajun Mei.
\newblock Large-scale multi-stream quickest change detection via shrinkage post-change estimation.
\newblock {\em IEEE Transactions on Information Theory}, 61(12):6926--6938, 2015.

\bibitem{xie2019asynchronous}
Liyan Xie, Yao Xie, and George~V Moustakides.
\newblock Asynchronous multi-sensor change-point detection for seismic tremors.
\newblock In {\em 2019 IEEE International Symposium on Information Theory (ISIT)}, pages 787--791. IEEE, 2019.

\bibitem{raghavan2010quickest}
Vasanthan Raghavan and Venugopal~V Veeravalli.
\newblock Quickest change detection of a markov process across a sensor array.
\newblock {\em IEEE Transactions on Information Theory}, 56(4):1961--1981, 2010.

\bibitem{zou2019quickest}
Shaofeng Zou, Venugopal~V Veeravalli, Jian Li, and Don Towsley.
\newblock Quickest detection of dynamic events in networks.
\newblock {\em IEEE Transactions on Information Theory}, 66(4):2280--2295, 2019.

\bibitem{hadjiliadis2009one}
Olympia Hadjiliadis, Hongzhong Zhang, and H~Vincent Poor.
\newblock One shot schemes for decentralized quickest change detection.
\newblock {\em IEEE Transactions on Information Theory}, 55(7):3346--3359, 2009.

\bibitem{xu2021optimum}
Qunzhi Xu, Yajun Mei, and George~V Moustakides.
\newblock Optimum multi-stream sequential change-point detection with sampling control.
\newblock {\em IEEE Transactions on Information Theory}, 67(11):7627--7636, 2021.

\bibitem{zhang2019partially}
Chen Zhang and Steven~CH Hoi.
\newblock Partially observable multi-sensor sequential change detection: A combinatorial multi-armed bandit approach.
\newblock In {\em Proceedings of the AAAI Conference on Artificial Intelligence}, volume~33, pages 5733--5740, 2019.

\bibitem{ross1995stochastic}
Sheldon~M Ross.
\newblock {\em Stochastic processes}.
\newblock John Wiley \& Sons, 1995.

\bibitem{gut2009stopped}
Allan Gut.
\newblock {\em Stopped random walks}.
\newblock Springer, 2009.

\bibitem{siegmund1985sequential}
D.~Siegmund.
\newblock {\em Sequential Analysis: Tests and Confidence Intervals}.
\newblock Springer Series in Statistics. Springer, 1985.

\bibitem{lorden1970excess}
Gary Lorden.
\newblock On excess over the boundary.
\newblock {\em The Annals of Mathematical Statistics}, 41(2):520--527, 1970.

\bibitem{kiefer1963asymptotically}
Jack Kiefer and J~Sacks.
\newblock Asymptotically optimum sequential inference and design.
\newblock {\em The Annals of Mathematical Statistics}, pages 705--750, 1963.

\bibitem{lattimore2020bandit}
Tor Lattimore and Csaba Szepesv{\'a}ri.
\newblock {\em Bandit algorithms}.
\newblock Cambridge University Press, 2020.

\end{thebibliography}

\end{document}